\newtheorem{theorem}{Theorem}
\newtheorem{proposition}{Proposition}
\newacronym{flrw}{FLRW}{Friedmann-Lemaitre-Robertson-Walker}
\newacronym{gr}{GR}{General Relativity}
\newacronym{tegr}{TEGR}{Teleparallel Equivalent of General Relativity}
\newacronym{stegr}{STEGR}{Symmetric Teleparallel Equivalent of General Relativity}
\newacronym{lcdm}{$\Lambda$CDM}{Cold Dark Matter and Cosmological Constant}
\newacronym{sds}{SdS}{Schwarzschild-de Sitter}
\newacronym{tg}{TG}{Teleparallel Gravity}
\newacronym{stg}{STG}{Symmetric Teleparallel Gravity}
\begin{document}
\articletype{Paper} 

\title{Cosmologically Viable Solutions in Geometric Modified Gravity}

\author{P. A. G. Monteiro$^{1,2}$\orcid{0009-0005-1332-3807} and C. J. A. P. Martins$^{1,3,*}$\orcid{0000-0002-4886-9261}}

\affil{$^1$Centro de Astrof\'{\i}sica da Universidade do Porto, Rua das Estrelas, 4150-762 Porto, Portugal}

\affil{$^2$Faculdade de Ci\^encias, Universidade do Porto, Rua do Campo Alegre, 4150-007 Porto, Portugal}

\affil{$^3$Instituto de Astrof\'{\i}sica e Ci\^encias do Espa\c co, Universidade do Porto, Rua das Estrelas, 4150-762 Porto, Portugal}

\affil{$^*$Author to whom any correspondence should be addressed.}
\email{Carlos.Martins@astro.up.pt}

\keywords{Cosmology, Modified Gravity, Torsion, Non-metricity, Observational Tests}

\justifying

\begin{abstract}
\justifying
The discovery of the accelerated expansion of the universe highlighted General Relativity’s inability to naturally account for dark energy without invoking a finely tuned cosmological constant. In response, a wide range of alternative paradigms have been proposed. Among these, Teleparallel Gravity and Symmetric Teleparallel Gravity, which depart from the Riemannian framework of General Relativity and instead rely on torsion or non-metricity to describe gravitational interactions, have gained increasing attention in recent years. We explore extensions of these non-Riemannian approaches, aiming to replicate the observed late-time acceleration of the universe by emulating the cosmological constant's role. We also evaluate the consistency of these theories with local gravity constraints by studying their static, spherically symmetric solutions. We show that although some models can reproduce the desired cosmological behavior, they often fail to meet Solar System observational bounds, particularly through deviations in the predicted Eddington parameter. Our findings underscore the need for a unified approach that tests modified gravity theories across both cosmological and local scales.
\end{abstract}

\section{\label{sec1}Introduction}

The discovery of the accelerated expansion of the universe has triggered a flourishing interest in modified theories of gravity that aim to explain cosmic acceleration without invoking dark energy. This quest is informed by the need to reproduce the observationally successful cosmological behavior of the standard $\Lambda$CDM model. Among the possible alternatives, a compelling class of models extends the General Relativity (GR) framework by considering non-Riemannian geometries. In particular, gravity can be formulated not as a manifestation of spacetime curvature - as is the case in GR - but instead as mediated by other geometric quantities: torsion in Teleparallel Gravity (TG) and non-metricity in Symmetric Teleparallel Gravity (STG). These three formulations constitute the so-called geometric trinity of gravity \cite{beltran2019trinity}. Despite their distinct geometric constructions, these theories are dynamically equivalent at the level of the field equations.

To break this equivalence, attention has shifted to non-linear generalizations, specifically the theories of the form $f(R)$, $f(\mathbb{T})$ and $f(\mathbb{Q})$ where $R$, $\mathbb{T}$ and $\mathbb{Q}$ are the curvature, torsion, and non-metricity scalars, respectively. These extensions are no longer equivalent but can still yield similar field equations under specific conditions, maintaining an element of degeneracy \cite{wu2024correspondence}.

A valuable approach to investigating the cosmological viability of these theories is the reconstruction method, which seeks to determine the form of the gravitational Lagrangian that yields a desired background evolution, such as that of the $\Lambda$CDM model, from a given Hubble function $H(t)$. However, it has been shown that none of the single-variable theories $f(R)$, $f(\mathbb{T})$ and $f(\mathbb{Q})$ can reproduce the $\Lambda$CDM expansion without including a cosmological constant in the action \cite{LCDM_Reconstruction,gadbail2025_thesis}. This limitation has motivated the study of more general extensions, such as non-minimal matter-geometry couplings \cite{Gadbail_2023_Q_T,bertolami2007extra,harko2014nonminimal,das2024nonminimal} and boundary-term extended theories $f(\mathbb{T},B)$ and $f(\mathbb{Q},C)$ \cite{capozziello2025extended,Gadbail_2023_Q_C,Caruana_2020}. {We will analyze the reconstruction of boundary term extensions in order to formulate theories that reproduce the $\Lambda$CDM dynamics without invoking a cosmological constant.}

Importantly, any viable theory must not only explain cosmic acceleration but also pass stringent Solar System tests, which agree with General Relativity with high precision. The correct procedure to analyze these theories in the Solar System context involves matching the spherically symmetric exterior vacuum solution to an interior solution representing the Sun's matter distribution. This method was first applied in the analysis of the model $f(R)=R-\frac{\mu^4}{R}$ in \cite{erickcek2006} and later generalized to a broader class of $f(R)$ models \cite{chiba2007solar}. {We will apply these methods to teleparallel theories of gravity, demonstrating that results analogous to those of $f(R)$ gravity hold in this framework, in particular allowing us to rule out a specific subclass of such theories.}

In this paper, we aim to obtain cosmologically viable teleparallel theories via reconstruction and further constrain them using Solar System tests. In Section \ref{sec2}, we review the necessary concepts of non-Riemannian geometry that underpin these theories. In Section \ref{sec3}, we present the geometric trinity of gravity and its extensions. In Section \ref{sec4}, we review the reconstruction of $\Lambda$CDM in $f(R)$, $f(\mathbb{T})$ and $f(\mathbb{Q})$, showing that all require a cosmological constant. We then extend the reconstruction method to $f(\mathbb{T},B)$ and $f(\mathbb{Q},C)$, identifying a class of models that can reproduce $\Lambda$CDM without introducing a cosmological constant. In Section \ref{chapter solar}, we study these reconstructed theories in the context of Solar System tests, determining whether they satisfy or fail observational constraints. Moreover, we extend our analysis to a broader class of $f(\mathbb{T},B)$ and $f(\mathbb{Q},C)$ theories, generalizing the result obtained by Chiba et al. \cite{chiba2007solar} for $f(R)$ gravity. Finally, our results are summarized in Section \ref{sec6}. A more extended discussion of these results can be found in \cite{Thesis}.

\section{\label{sec2}Mathematical Formulation}

We start with a review of the relevant concepts, definitions and mathematical formalism, for the purpose of making the present work self-contained. A metric-affine geometry or Non-Riemannian geometry is defined by a triple $(\mathcal{M},g,\Gamma )$, where $\mathcal{M}$ is a 4-dimensional smooth manifold, $g$ is a Lorentzian metric on $\mathcal{M}$ (a smooth, non-degenerate, symmetric tensor field of type $(0,2)$ with signature $(-,+,+,+)$) and $\Gamma$ is a general affine connection. The metric allows the definition of norm of a vector: 
\begin{equation}\label{norma}
    ||V||=g_{\mu\nu}V^{\mu}V^{\nu}
\end{equation}
The affine connection defines the covariant derivative of a vector field $V^\nu$ as:
\begin{equation}\label{derivada covariante}
\nabla_{\mu}V^{\nu}=\partial_{\mu}V^{\nu}+\Gamma_{\hspace{0.5em}\mu\alpha}^{\nu}V^{\alpha}    
\end{equation}
Given a curve $x^{\mu}(\lambda)$, we define the direction derivative along the curve as
\begin{equation}
    \frac{D}{d\lambda}\equiv\frac{dx^\mu}{d\lambda}\nabla_\mu
\end{equation}
A vector $V^\mu$ is said to be parallel transported along the curve $x^{\mu}(\lambda)$ if 
\begin{equation}\label{Transporte Paralelo}
    \frac{D}{d\lambda}V^\mu=0
\end{equation} 

In a metric-affine geometry we can define the geometric quantities:

\textit{Riemann Tensor} 
\begin{equation}\label{Riemann Tensor}
R_{\hspace{0.4em}\mu\nu\rho}^{\alpha}=\partial_{\nu}\Gamma_{\hspace{0.4em}\rho\mu}^{\alpha}-\partial_{\rho}\Gamma_{\hspace{0.4em}\nu\mu}^{\alpha}+\Gamma_{\hspace{0.4em}\nu\lambda}^{\alpha}\Gamma_{\hspace{0.4em}\rho\mu}^{\lambda}-\Gamma_{\hspace{0.4em}\rho\lambda}^{\alpha}\Gamma_{\hspace{0.4em}\nu\mu}^{\lambda}
\end{equation}
\\
\textit{Torsion Tensor} 
\begin{equation}\label{Torsion Tensor}
T_{\hspace{1em}\mu\nu}^{\alpha}=\Gamma_{\hspace{1em}\mu\nu}^{\alpha}-\Gamma_{\hspace{1em}\nu\mu}^{\alpha}
\end{equation}
\\
\textit{Non-Metricity Tensor}
\begin{equation}\label{Non Metricity Tensor}
    Q_{\alpha\mu\nu}=\nabla_{\alpha}g_{\mu\nu}
\end{equation}
From the Riemann tensor, we construct the Ricci Tensor $R_{\mu\nu}$ and the Ricci Scalar $R$ as:
\begin{subequations}
\begin{align}
    R_{\mu\nu}&=R_{\hspace{0.4em}\mu\lambda\nu}^{\lambda}\\ 
R&=R_{\mu\nu}g^{\mu\nu} 
\end{align}
\end{subequations}
Contracting indices of the torsion and non-metricity tensors, we define the Torsion Vector $T_{\mu}$ and the First and Second Non-Metricity Tensor $Q_{\alpha}$ and $\bar{Q}_{\alpha}$:
\begin{subequations}
\begin{align}
T_{\mu}&=T_{\hspace{0.4em}\mu\alpha}^{\alpha} \\ 
Q_{\alpha}&=g^{\mu\nu}Q_{\alpha\mu\nu} \\
\bar{Q}_{\alpha}&=g^{\mu\nu}Q_{\mu\nu\alpha}
\end{align}
\end{subequations}

\subsection{Geometric Interpretation of Curvature, Torsion and Non-Metricity}

Consider a vector $V^\mu$ and parallel transport it around an infinitesimal parallelogram spanned by two vectors $A^\mu$ and $B^\mu$: first parallel transport $V^\mu$ in the direction of $A^\mu$, then along $B^\mu$ and then backward along $A^\mu$ and $B^\mu$. The change in the vector $V^\mu$ after being parallel transported around this closed loop is given by
\begin{equation}
    \delta{V^\mu}=R^\mu_{\hspace{0.5em}\nu \alpha \beta}V^\nu A^\alpha B^\beta 
\end{equation}
Thus, the curvature tensor measures the change of a vector when it is parallel transported along a close loop.

In Euclidean Geometry, the sum of two vectors can be visualized as a parallelogram. The torsion tensor measures the failure of an infinitesimal parallelogram to close. That is, torsion quantifies the non-commutativity of infinitesimal displacements. Consider two curves: $x^{\mu}=x^{\mu}(\lambda)$ and $\tilde{x}^{\mu}=\tilde{x}^{\mu}(\lambda)$ with tangent vectors $u^{\mu}=\frac{dx^{\mu}}{d\lambda}$ and $\tilde{u}^{\mu}=\frac{d\tilde{x}^{\mu}}{d\lambda}$ Now, displace $u^{\alpha}$ infinitesimally along $\tilde{x}^{\mu}$: 
\begin{equation}\label{aux1}
u'^{\alpha}=u^{\alpha}+(\partial_{\mu}u^{\alpha})d\tilde{x}^{\mu}    
\end{equation}
Since $u^{\alpha}$ is parallel transported along $\tilde{x}$ it follows that
\begin{equation}
\frac{d\tilde{x}^{\mu}}{d\lambda}\partial_{\mu}u^{\alpha}+\Gamma_{\hspace{0.4em}\nu\mu}^{\alpha}\frac{d\tilde{x}^{\mu}}{d\lambda}u^{\nu}=0 \implies (\partial_{\mu}u^{\alpha})d\tilde{x}^{\mu}=-\Gamma_{\hspace{0.4em}\nu\mu}^{\alpha}u^{\nu}\tilde{u}^{\mu}d\lambda    
\end{equation}
Substituting this into equation (\ref{aux1})
\begin{equation}
    u'^{\alpha}=u^{\alpha}-\Gamma_{\hspace{0.4em}\nu\mu}^{\alpha}u^{\nu}\tilde{u}^{\mu}d\lambda
\end{equation}
Applying the same reasoning to the other vector gives
\begin{equation}
\tilde{u}'{}^{\alpha}=\tilde{u}^{\alpha}-\Gamma_{\hspace{0.4em}\nu\mu}^{\alpha}\tilde{u}^{\nu}u^{\mu}d\lambda=\tilde{u}^{\alpha}-\Gamma_{\hspace{0.4em}\mu\nu}^{\alpha}\tilde{u}^{\mu}u^{\nu}d\lambda    
\end{equation}
The failure to close the parallelogram is then quantified by
\begin{equation}
(\tilde{u}^{\alpha}+u'{}^{\alpha})-(u^{\alpha}+\tilde{u}'{}^{\alpha})=\tilde{u}^{\alpha}+u^{\alpha}-\Gamma_{\hspace{0.4em}\nu\mu}^{\alpha}u^{\nu}\tilde{u}^{\mu}d\lambda-u^{\alpha}-\tilde{u}^{\alpha}+\Gamma_{\hspace{0.4em}\mu\nu}^{\alpha}\tilde{u}^{\mu}u^{\nu}d\lambda=T_{\hspace{0.4em}\mu\nu}^{\alpha}\tilde{u}^{\mu}u^{\nu}d\lambda    
\end{equation}

The non-metricity tensor provides a measure for how the magnitude of a vector changes under parallel transport. The change in the magnitude of a vector $V^\mu$ under parallel transport along a curve $x^{\mu}=x^{\mu}(\lambda)$ is
\begin{equation}\label{change of norm}
 \frac{D}{d\lambda}(V^{\mu}V^{\nu}g_{\mu\nu})=2\frac{dx^{\alpha}}{d\lambda}(\nabla_{\alpha}V^{\mu})V_{\mu}+\frac{dx^{\alpha}}{d\lambda}(\nabla_{\alpha}g_{\mu\nu})V^{\mu}V^{\nu}   
\end{equation}
Since $V^{\mu}$ is parallel transported along the curve, $\frac{dx^{\alpha}}{d\lambda}(\nabla_{\alpha}V^{\mu})=0$ by definition (\ref{Transporte Paralelo}). Thus, we are left with
\begin{equation}\label{change of norm 2}
  \frac{D}{d\lambda}(||V^{2}||)=Q_{\alpha\mu\nu}\frac{dx^{\alpha}}{d\lambda}V^{\mu}V^{\nu}  
\end{equation}
This shows that non-metricity $Q_{\alpha\mu\nu}$ directly controls the variation of the length of a vector during parallel transport. In Riemannian geometry, where $Q_{\alpha\mu\nu}=0$, the length of a vector remains unchanged under parallel transport.

\subsection{Decomposition of the Connection}

GR adopts the Riemannian Geometry framework, where it is postulated that $T_{\hspace{0.5em}\mu\nu}^{\alpha}=0$ and $Q_{\alpha\mu\nu}=0$. 
These two geometric conditions are satisfied if and only if the connection is the Levi-Civita connection, which is uniquely determined by the metric 
\begin{equation}\label{Levi-Civita Connection}
  \Gamma_{\hspace{0.5em}\mu\nu}^{\alpha}=\frac{1}{2}g^{\alpha\lambda}(\partial_{\mu}g_{\nu\lambda}+\partial_{\nu}g_{\mu\lambda}-\partial_{\lambda}g_{\mu\nu}) 
\end{equation}
Since the Levi-Civita connection is fully determined by the metric, a spacetime in General Relativity is fully described by the pair $(\mathcal{M},g)$.

In contrast, a metric-affine geometry $(\mathcal{M},g,\Gamma)$ considers a more general connection $\Gamma$ that is not necessarily Levi-Civita and can include contributions from both torsion and non-metricity.  In this context, the connection can be decomposed into the Levi-Civita part plus extra contributions
\begin{equation}\label{decomposition of connection}
\Gamma_{\hspace{0.5em}\mu\nu}^{\alpha}=\mathring\Gamma_{\hspace{0.5em}\mu\nu}^{\alpha}+D_{\hspace{0.4em}\mu\nu}^{\alpha}
\end{equation}
where $\mathring\Gamma_{\hspace{0.5em}\mu\nu}^{\alpha}$ is the Levi-Civita connection and $D_{\hspace{0.4em}\mu\nu}^{\alpha}$ is the distortion tensor, which captures deviations from Riemannian geometry. It is composed of two parts: the contorsion tensor, which encodes torsion
\begin{equation}\label{contorsion tensor}
K_{\hspace{0.4em}\mu\nu}^{\alpha}=\frac{1}{2}g^{\alpha\lambda}(T_{\lambda\mu\nu}+T_{\mu\lambda\nu}+T_{\nu\lambda\mu}) 
\end{equation}
and the disformation tensor, which encodes non-metricity
\begin{equation}\label{disformation tensor}
L_{\hspace{0.4em}\mu\nu}^{\alpha}=\frac{1}{2}g^{\alpha\lambda}(Q_{\lambda\mu\nu}-Q_{\mu\lambda\nu}-Q_{\nu\lambda\mu})
\end{equation}
Thus, the total distortion is
\begin{equation}\label{distorsion tensor}
D_{\hspace{0.4em}\mu\nu}^{\alpha}=K_{\hspace{0.4em}\mu\nu}^{\alpha}+L_{\hspace{0.4em}\mu\nu}^{\alpha}
\end{equation}
Since the Riemann tensor depends only on the connection, this decomposition induces a corresponding decomposition of the Riemann tensor
\begin{equation}
R_{\hspace{0.5em}\beta\mu\nu}^{\alpha}(\Gamma)=\mathring{R}_{\hspace{0.5em}\beta\mu\nu}^{\alpha}+\mathring\nabla_\mu D^\alpha_{\hspace{0.5em}\nu\beta}-\mathring\nabla_\nu D^\alpha_{\hspace{0.5em}\mu\beta}+D^\sigma_{\hspace{0.5em}\nu \beta}D^\alpha_{\hspace{0.5em}\mu \sigma}-D^\sigma_{\hspace{0.5em}\mu \beta}D^\alpha_{\hspace{0.5em}\nu \sigma}       
\end{equation}
where $\mathring{R}_{\hspace{0.5em}\mu\nu\rho}^{\alpha}$ and $\mathring\nabla_{\mu}$ are the Riemann tensor and covariant derivative associated with the Levi-Civita connection.

We define the following scalar quantities:

\textit{Torsion scalar}
\begin{equation}
\mathbb{T}=-\frac{1}{4}T^{\alpha\mu\nu}T_{\alpha\mu\nu}-\frac{1}{2}T^{\alpha\mu\nu}T_{\mu\alpha\nu}+T^{\alpha}T_{\alpha}   
\end{equation}

\textit{Non-metricity scalar}
\begin{equation}
\mathbb{Q}=-\frac{1}{4}Q_{\alpha\mu\nu}Q^{\alpha\mu\nu}+\frac{1}{2}Q_{\alpha\mu\nu}Q^{\mu\alpha\nu}+\frac{1}{4}Q_{\alpha}Q^{\alpha}-\frac{1}{2}Q_{\alpha}\bar{Q}^{\alpha}
\end{equation}

Contracting the Riemann tensor decomposition leads to a decomposition of the Ricci scalar
\begin{equation}
R(\Gamma)=\mathring{R}-\mathbb{T}-\mathbb{Q}+T^{\rho\mu\nu}Q_{\mu\nu\rho}-T^{\mu}Q_{\mu}+T^{\mu}\bar{Q}_{\mu}+\mathring\nabla_{\alpha}(Q^{\alpha}-\bar{Q}^{\alpha}+2T^{\alpha})    
\end{equation}
This result connects the scalar curvature of a general connection to quantities derived from the Levi-Civita connection and the torsion and non-metricity tensors. The special cases with vanishing curvature give useful identities.

In a spacetime with zero curvature and zero non-metricity ($R(\Gamma)=0$, $Q_{\alpha\mu\nu}=0$) we obtain
\begin{equation}\label{identity in TEGR}
\mathring{R}=\mathbb{T}-2\mathring\nabla_{\alpha}T^{\alpha}   
\end{equation}

In a spacetime with zero curvature and zero torsion ($R(\Gamma)=0$, $T^\alpha_{\hspace{0.7em}\mu\nu}=0$), we obtain
\begin{equation}\label{identity in STEGR}
\mathring{R}=\mathbb{Q}-\mathring\nabla_{\alpha}(Q^{\alpha}-\bar{Q}^{\alpha})  
\end{equation}

These identities will be subsequently used to construct alternative theories of gravity.

\section{\label{sec3}Theories of Gravity}

\subsection{General Relativity}

In General Relativity, it is postulated that both torsion and non-metricity vanish,
\begin{equation}
    T_{\hspace{0.5em}\mu\nu}^{\alpha}=0 \hspace{4em} Q_{\alpha\mu\nu}=0
\end{equation} As a result, gravity manifests as the curvature of spacetime. The field equations of General Relativity relate the curvature of spacetime to the energy-momentum content. They can be derived from the Einstein–Hilbert action
 \begin{equation}
S=\frac{1}{2k}\int_{\mathcal{M}}d^{4}x\sqrt{-g}R+S_{M}  
 \end{equation}
where $S_{M}$ is the matter action and $k=8\pi G$ in natural units. Varying this action with respect to the metric yields the Einstein field equations:
\begin{equation}
G_{\mu\nu}=kT_{\mu\nu}    
\end{equation}
where the Einstein Tensor is given by
\begin{equation}
    G_{\mu\nu}=R_{\mu\nu}-\frac{1}{2}g_{\mu\nu}R\,.
\end{equation}

\subsection{Teleparalellism}

Teleparallelism was first introduced by Einstein himself, as an attempt to unify gravity with electromagnetism \cite{unzicker2005translation,bahamonde2023teleparallel}. In Teleparallel Gravity, spacetime is flat and metric-compatible and gravity is mediated by torsion. Consider a metric-affine geometry $(\mathcal{M},g,\Gamma)$ where the connection satisfies
\begin{equation}
    R_{\hspace{0.7em}\mu\nu\rho}^{\alpha}=0 \hspace{4em} Q_{\alpha\mu\nu}=0
\end{equation} 
Under these conditions, we have shown that
\begin{equation}
\mathring{R}=\mathbb{T}-2\mathring\nabla_{\alpha}T^{\alpha}  \tag{\ref{identity in TEGR}} 
\end{equation}
The dynamics is specified by the action 
\begin{equation}\label{eq t}
S_{TEGR}=\frac{1}{2k}\int_{\mathcal{M}}d^{4}x\sqrt{-g}\mathbb{T}+S_{M}
\end{equation}
From equation (\ref{identity in TEGR}), we see that this Lagrangian differs from the Einstein–Hilbert Lagrangian by a total divergence. Therefore, the metric field equations derived from (\ref{eq t}) are equivalent to those of General Relativity. This theory is known as Teleparallel Equivalent of General Relativity (TEGR).

\subsection{STG}

In 1999, Nester \& Yo \cite{nester1998symmetric} proposed an alternative theory, Symmetric Teleparallel Gravity, constructed by setting curvature and torsion to zero, with gravity mediated by non-metricity. In this case
\begin{equation}
\mathring{R}=\mathbb{Q}-\mathring\nabla_{\alpha}(Q^{\alpha}-\bar{Q}^{\alpha})\tag{\ref{identity in STEGR}}
\end{equation}
and therefore a theory described by the action
\begin{equation}\label{eq q}
S_{TEGR}=\frac{1}{2k}\int_{\mathcal{M}}d^{4}x\sqrt{-g}\mathbb{Q}+S_{M}
\end{equation}
is equivalent to GR.

\subsection{Modified Teleparallelism}

The equivalence of GR, TEGR and STEGR raises profound questions about the fundamental geometric nature of spacetime. While General Relatity describes gravity as a manifestation of spacetime curvature, TEGR and STEGR attribute gravitational effects to torsion and non-metricity, respectively. Yet, they have the same field equations and therefore it appears that these three geometric frameworks are observationally indistinguishable. This ambiguity is known as the problem of geometric underdetermination \cite{KNOX, mulder2024underdetermination,mancini2025equivalent}.

A further consequence of this degeneracy is that these equivalent theories inherit the same limitations. In particular, none of them can provide an explanation for the accelerated expansion of the Universe without resorting to a cosmological constant. If  GR requires such a constant to fit cosmological observations, then so do TEGR and STEGR.

This motivates the study of modified gravity theories using the geometric trinity of gravity as a starting point. Extending the Lagrangian to arbitrary function of the scalar $f(R),f(\mathbb{T}),f(\mathbb{Q})$ breaks this degeneracy. These modified theories could exhibit distinct phenomenologies and potentially offer alternative explanations for dark energy.  The use of modified versions of these theories, having in mind their discrimination, was explored in \cite{capozziello2024comparing} by comparing their suitability in describing inflation.

\subsubsection{$f(R)$ gravity}

A natural extension of GR, first considered by Buchdahl in 1970 \cite{Buchdahl}, is to replace the Ricci scalar $R$ in the Einstein-Hilbert action with an arbitrary function $f(R)$.  One of its main motivations is its potential to explain cosmic acceleration without invoking dark energy. Notably, Starobinsky used an $R^2$ correction to GR to model inflation \cite{Starobinsky1980}, producing what is still one of the most reliable inflation models. Additionally, $f(R)$ theories have been employed to describe the late-time acceleration of the universe \cite{CAPOZZIELLO_2002,Carroll_2004}, while \cite{Nojiri_2003} used it in an attempted unified description of early and late acceleration.

We define the theory by the action
\begin{equation}
S=\int_{\mathcal{M}}d^{4}x\frac{1}{2k}\sqrt{-g}f(R)+S_M \,, 
\end{equation}
and the $f(R)$ field equations derived in the metric formalism are:
\begin{equation}\label{metric eq f(R)}
f_RR_{\mu\nu}-\frac{1}{2}g_{\mu\nu}f(R)+(g_{\mu\nu}\square-\nabla_{\mu}\nabla_{\nu})f_R=8\pi GT_{\mu\nu}    
\end{equation}
where $\square\equiv\nabla_{\mu}\nabla^{\mu}$ and $f_R\equiv\frac{df}{dR}$. The field equations are fourth order non-linear equations for the metric, due to the second order differential operator $g_{\mu\nu}\square-\nabla_{\mu}\nabla_{\nu}$ acting on $f_R$ which already contains second order derivatives of the metric.

\subsubsection{$f(\mathbb{T})$ and $f(\mathbb{Q})$ gravity}

An analogous modification in the teleparallel framework leads to replacing the scalar $\mathbb{T}$ in TEGR with a general function $f(\mathbb{T})$ and the scalar $\mathbb{Q}$ in STEGR with a general function $f(\mathbb{Q})$. We define the torsion superpotential 
\begin{equation}\label{torsion superpotential}
S_{\alpha}^{\hspace{1em}\mu\nu}=\frac{1}{4}(T_{\hspace{0.5em}\alpha}^{\mu\hspace{0.5em}\nu}+T_{\alpha}^{\hspace{0.5em}\mu\nu}+T_{\hspace{1em}\alpha}^{\nu\mu})+\frac{1}{2}\delta_{\alpha}^{\mu}T^{\nu}-\frac{1}{2}\delta_{\alpha}^{\nu}T^{\mu}
\end{equation} 
which can be used to write the torsion scalar as
\begin{equation}
    \mathbb{T}=-S_{\alpha}^{\hspace{0.6em}\mu\nu}T_{\hspace{0.6em}\mu\nu}^{\alpha}
\end{equation}

Further defining the non-metricity superpotencial 
\begin{equation}
    P^\alpha_{\hspace{0.5em}\mu\nu}=\frac{1}{4}[-Q^\alpha_{\hspace{0.5em}\mu\nu}+Q^{\hspace{0.5em}\alpha}_{\mu\hspace{0.5em}\nu}+Q^{\hspace{0.5em}\alpha}_{\nu\hspace{0.5em}\mu}+(Q^\alpha-\bar{Q}^\alpha)g_{\mu\nu}-\frac{1}{2}(\delta^\alpha_\mu Q_\nu+\delta^\alpha_\nu Q_\mu)]
\end{equation}
so that
\begin{equation}
\mathbb{Q}=P_{\alpha\mu\nu}Q^{\alpha\mu\nu}\,,
\end{equation}
the metric field equations for these theories can be written in a form resembling GR \cite{heisenberg2024review}:
\begin{equation}\label{metric equation f(T) Einstein form}
  f_\mathbb{T}G_{\mu\nu}-\frac{1}{2}g_{\mu\nu}(f-f_\mathbb{T}\mathbb{T})+2f_{\mathbb{T}\mathbb{T}}(\mathbb{T})S_{(\mu\nu)}^{\hspace{1.7em}\alpha}\partial_{\alpha}\mathbb{T}=kT_{\mu\nu}  
\end{equation}
\begin{equation}\label{metric equation f(Q) Einstein form}
f_\mathbb{Q}G_{\mu\nu}-\frac{1}{2}(f-f_\mathbb{Q}\mathbb{Q})+2f_{\mathbb{Q}\mathbb{Q}}P_{\hspace{0.5em}\mu\nu}^{\alpha}\partial_{\alpha}\mathbb{Q}=kT_{\mu\nu}
\end{equation}
From this form, it becomes clear that the cases $f(\mathbb{T})=\mathbb{T}$ and $f(\mathbb{Q})=\mathbb{Q}$ are equivalent to General Relativity.
The connection field equations are
\begin{subequations}
\label{connection equations}
\begin{align}
\label{connection equation f(T)}
(\nabla_{\mu}+T_{\mu})(f_\mathbb{T}S_{[\alpha\hspace{0.5em}\beta]}^{\hspace{0.5em}\mu})&=0\\
\nabla_{\mu}\nabla_{\nu}(\sqrt{-g}f_\mathbb{Q}P_{\hspace{1em}\alpha}^{\mu\nu})&=0
\end{align}
\end{subequations}

\subsubsection{Boundary Term Extensions}

The teleparallel theories from the modified trinity can be extended to include $f(\mathbb{T})$, $f(\mathbb{Q})$ and $f(R)$ gravity as special cases. This is achieved by promoting the boundary terms in the equivalence relations (\ref{identity in TEGR}) and (\ref{identity in STEGR}) to independent variables within the action.

The equivalence between GR and TEGR was a result of the identity
\begin{equation}
\mathring{R}=\mathbb{T}-2\mathring\nabla_{\alpha}T^{\alpha}  \tag{\ref{identity in TEGR}}\,.
\end{equation}
Defining the boundary term $B\equiv2\mathring\nabla_{\alpha}T^{\alpha}$, we consider the action
\begin{equation}
S=\frac{1}{2k}\int_{\mathcal{M}}d^{4}x\sqrt{-g}f(\mathbb{T},B)+S_{M}    
\end{equation}
The corresponding metric field equations, derived in \cite{capozziello2025extended}, are
\begin{equation}\label{equacao f(T,B)}
G_{\mu\nu}f_{\mathbb{T}}-\frac{1}{2}g_{\mu\nu}(f-f_{\mathbb{T}}\mathbb{T}-f_{B}B)+2\partial_{\alpha}(f_{\mathbb{T}}+f_{B})S_{(\mu\nu)}^{\hspace{1em}\alpha}+\nabla_{\mu}\nabla_{\nu}f_{B}-g_{\mu\nu}\square f_{B}=kT_{\mu\nu} 
\end{equation}
If $f(\mathbb{T},B)=f(\mathbb{T})$ we get back the field equation for $f(\mathbb{T})$ gravity (\ref{metric equation f(T) Einstein form}). If $f(\mathbb{T},B)=f(\mathbb{T}-B)=f(R)$, which implies the relation $f_\mathbb{T}=f_R=-f_B$, we get back the field equation for $f(R)$ gravity (\ref{metric eq f(R)}). 

The connection field equation is
\begin{equation}\label{equacao conexao f(T,C)}
(\nabla_{\alpha}+T_{\alpha})[S_{[\mu\hspace{0.5em}\nu]}^{\hspace{0.7em}\alpha}\sqrt{-g}(f_\mathbb{T}+f_{B})]=0    
\end{equation}
which reduces to the equation (\ref{connection equation f(T)}) in the case $f(\mathbb{T},B)=f(\mathbb{T})$ and becomes an identity in the case $f(\mathbb{T},B)=f(\mathbb{T}-B)=f(R)$, consistent with the absence of a dynamical connection in metric $f(R)$.

Analogously for STG, we use the identity
\begin{equation}
\mathring{R}=\mathbb{Q}-\mathring\nabla_{\alpha}(Q^{\alpha}-\bar{Q}^{\alpha})  
\tag{\ref{identity in STEGR}}
\end{equation}
to define a boundary term variable $C\equiv\mathring\nabla_{\alpha}(Q^{\alpha}-\bar{Q}^{\alpha})$ and construct the action
\begin{equation}
  S=\frac{1}{2k}\int_{\mathcal{M}}d^{4}x\sqrt{-g}f(\mathbb{Q},C)+S_{M}  
\end{equation}
The metric field equations, derived in \cite{capozziello2023role}, are
\begin{equation}\label{equacao f(Q,C)}
G_{\mu\nu}f_{\mathbb{Q}}-\frac{1}{2}g_{\mu\nu}(f-f_{\mathbb{Q}}\mathbb{Q}-f_{C}C)+2\partial_{\alpha}(f_{\mathbb{Q}}+f_{C})P_{\hspace{0.5em}\mu\nu}^{\alpha}+\nabla_{\mu}\nabla_{\nu}f_{C}-g_{\mu\nu}\square f_{C}=kT_{\mu\nu} 
\end{equation}
Again if $f(\mathbb{Q},B)=f(\mathbb{Q})$ we get back the field equation for $f(\mathbb{Q})$ gravity (\ref{metric equation f(Q) Einstein form}) and if $f(\mathbb{Q},C)=f(\mathbb{Q}-C)=f(R)$ we get back the field equation for $f(R)$ gravity (\ref{metric eq f(R)}).
The connection field equation is
\begin{equation}\label{equacao conexao f(Q,C)}
    \nabla_{\mu}\nabla_{\nu}[P_{\hspace{1em}\alpha}^{\mu\nu}\sqrt{-g}(f_{\mathbb{Q}}+f_{C})]=0
\end{equation}
which reduces to connection field equation of $f(\mathbb{Q})$ in the appropriate limit.

\subsection{Connection in Teleparallelism}\label{connection in tele}

Contrary to General Relativity where the geometric postulates fix the connection to be the Levi connection, which is entirely determined by the metric, in these new theories the geometric postulates do not completely fix the connection. In both TG and STG, the curvature is zero. We impose the teleparallel constraint by making the Riemann tensor vanish
\begin{equation}
R_{\hspace{0.4em}\mu\nu\rho}^{\alpha}=2\partial_{[\nu}\Gamma_{\hspace{0.4em}\rho]\mu}^{\alpha}+2\Gamma_{\hspace{0.4em}[\nu|\lambda}^{\alpha}\Gamma_{\hspace{0.4em}\rho]\mu}^{\lambda}=0
\end{equation}
which has the solution
\begin{equation}\label{flat connection}
\Gamma_{\mu\nu}^{\alpha}=(\Lambda^{-1})_{\lambda}^{\alpha}\partial_{\mu}\Lambda_{\nu}^{\lambda}   
\end{equation}
where $\Lambda_{\nu}^{\mu}$ is a matrix belonging to $GL(4,\mathbb{R})$. Any connection of this form is flat.

Imposing metric compatibility to this connection leads to
\begin{equation}
g^{\lambda(\mu}\partial_{\alpha}\Lambda_{\rho}^{\nu)}(\Lambda^{-1})_{\lambda}^{\rho}=\frac{1}{2}\partial_{\alpha}g^{\mu\nu}     
\end{equation}
This relation can be used to eliminate the metric in terms of the matrix $\Lambda$. Imposing instead that the flat connection (\ref{flat connection}) is symmetric, we get:
\begin{equation}
\partial_{[\mu}\Lambda_{\nu]}^{\alpha}=0\implies\Gamma_{\mu\nu}^{\alpha}=\frac{\partial x^{\alpha}}{\partial\xi^{\lambda}}\partial_{\mu}\partial_{\nu}\xi^{\lambda}    
\end{equation}
For $\xi^{\mu}=M_{\nu}^{\mu}x^{\nu}+\xi_{c}^{\mu}$ where $M$ is a constant matrix, the connection vanishes. This choice corresponds to the coincident gauge. There always exists a coordinate transformation that brings the connection into this trivial form. This can be used to construct a formulation of general relativity based on non-metricity, but without the affine structure, called Coincident General Relativity \cite{heisenberg2024review,jimenez2018coincident}.

A similar gauge can be constructed for TEGR using a different formalism. Instead of working with the metric and connection, one uses the tetrad and the spin connection. In a flat and metric-compatible spacetime, there always exists a local Lorentz transformation that makes the spin connection vanish. This is known as the Weitzenböck gauge. While it might seem that such gauges simplify calculations, this is not generally the case. The coordinate system in which the connection vanishes may not coincide with the one best adapted to the symmetries of the spacetime. Although the coordinate transformation to the coincident gauge eliminates the affine connection, the same information is transferred into the components of the metric, which becomes more complicated as a result. In other words, a diffeomorphism that trivializes the connection will typically make the metric more complex. All the information that was encoded in the symmetry-reduced connection is now moved into the metric. Hence, nothing is truly gained by using the coincident gauge.

For this reason, we will not use the coincident gauge or the Weitzenböck gauge in what follows and won't develop the tetrad formalism for TG. Instead, we adopt a metric-affine approach, where the metric and the connection are treated as independent. The connection is then constrained by the geometric postulates of the theory, and both the metric and the connection are required to be compatible with the symmetries of the spacetime under study.

\section{\label{sec4}Cosmology}

In general relativity, the cosmological principle, which posits that the Universe is homogeneous and isotropic on large scales, leads to the Friedmann-Lemaitre-Robertson-Walker (FLRW) metric
\begin{equation}
    ds^{2}=-dt^2+a^{2}(t)\left(\frac{dr^2}{1-\kappa r^2}+r^{2}d\Omega^{2}\right)
\end{equation}
where $\kappa$ denotes the spatial curvature The matter content of the Universe is modeled as a perfect fluid with energy-momentum tensor
\begin{equation}
    T_{\mu}^{\nu}=diag(-\rho,p,p,p)
\end{equation}
where $\rho$ is the energy density and $p$ the pressure of the fluid.

Substituting the FLRW metric and energy-momentum tensor into the Einstein field equations with a cosmological constant, we obtain the Friedmann equations
\begin{subequations}
    \begin{align}
        3H^{2}+\frac{3\kappa}{a^2}&=k\rho+\Lambda \label{Friedmann GR}\\
        2\dot{H}+3H^2+\frac{\kappa}{a^2}&=-kp+\Lambda
    \end{align}
\end{subequations}
In $\Lambda$CDM, the recent universe is spatially flat and has significant amounts of dust-like matter and dark energy. The first Friedmann equation for such composition reads
\begin{equation}\label{LDCM equation}
    3H^2=k\rho+\Lambda
\end{equation}
where the matter density evolves as $\rho=\frac{\rho_0}{a^3}$, following from the conservation equation for pressureless matter. In what follows, we aim to reproduce this solution within the alternative geometric frameworks of TG and STG. Our goal is to recover (\ref{LDCM equation}) but without introducing a cosmological constant explicitly in the action.

\subsection{Symmetry Reduction in Teleparallelism\label{symmetry cosmology}\label{reduction41}}

In contrast to GR or metric $f(R)$, where the geometric postulates of vanishing torsion and metric compatibility completely determine the affine connection in terms of the metric, TG and STG do not fully constrain the connection. While it is always possible to choose coordinates in which the connection vanishes identically—namely, the Weitzenböck gauge in TG and the coincident gauge in STG—these coordinate systems may not be naturally adapted to the symmetries of the spacetime under consideration.

For instance, in cosmological settings, the standard FLRW metric expressed in spherical coordinates cannot be consistently combined with the coincident gauge. Implementing the coincident gauge for that metric requires transforming the FLRW metric to Cartesian coordinates. Furthermore, adopting such gauges can obscure alternative connections that are also compatible with the geometric postulates. For these reasons, there has been a recent shift toward abandoning the coincident gauge in favor of a fully covariant formulation of $f(\mathbb{T})$ \cite{krssak2016covariant} and $f(\mathbb{Q})$ gravity \cite{zhao2022covariant}, where the metric and connection are treated as a priori independent variables. In this approach, the connection is subsequently constrained by the geometric postulates and by the imposition of spacetime symmetries.

In a cosmological context, this amounts to requiring that both the metric and the connection be invariant under the symmetries of spatial homogeneity and isotropy. Mathematically, this symmetry requirement is enforced by demanding that the Lie derivatives of the metric and the connection with respect to the Killing vectors generating rotations and translations vanish
\begin{subequations}\label{Lie Derivative}
    \begin{align}
    \mathcal{L}_{\xi}g^{\mu\nu}&=0\\
    \mathcal{L}_{\xi}\Gamma_{\hspace{0.7em}\mu\nu}^{\alpha}&=0
    \end{align}
\end{subequations}
The imposition of these symmetry constraints on the metric yields the FLRW metric
\begin{equation}
    ds^{2}=g_{tt}dt^2+a^{2}(t)\left(\frac{dr^{2}}{1-\kappa r^{2}}+r^{2}d\Omega^{2}\right)
\end{equation}
where $\kappa$ denotes the spatial curvature. The component $g_{tt}$ can always be set to $g_{tt}=-1$ by an appropriate coordinate transformation. For the remainder of this work, we will adopt this convention and fix $g_{tt}=-1$. Furthermore, since observational data strongly supports a spatially flat universe, we will assume the spatial curvature $\kappa=0$ from this point onward. {We are left with a single free function $a(t)$ to be determined by the field equations.}

{For the affine connection, the symmetry constraints (\ref{Lie Derivative}) reduce its 64 components to five arbitrary functions of time. To further decrease the number of degrees of freedom, it is then necessary to solve the equations arising from the geometric postulates of each theory.} For both $f(\mathbb{T})$ and $f(\mathbb{Q})$ gravity, we impose vanishing curvature
\begin{equation}
    R_{\hspace{0.4em}\mu\nu\rho}^{\alpha}=0
\end{equation}
In addition, we impose vanishing torsion $T_{\hspace{0.7em}\mu\nu}^{\alpha}=0$ for $f(\mathbb{Q})$ and for $f(\mathbb{T})$ vanishing non-metricity $Q_{\alpha\mu\nu}=0$. Since the torsion tensor is antisymmetric in its lower indices,the condition $T_{\hspace{0.7em}\mu\nu}^{\alpha}=0$  imposes 24 independent constraints, reducing the number of independent connection components from 64 to 40. In contrast, the non-metricity tensor is symmetric in its last two indices, so $Q_{\alpha\mu\nu}=0$  imposes 40 independent constraints. Thus, we expect that the STG connection is more strongly constrained than in the torsional case.
These equations are numerous and a systematic derivation has been carried out in \cite{Heisenberg_cosmology}. After imposing both the geometric and symmetry conditions, the following results emerge: In $f(\mathbb{T})$ gravity, there exists a unique spatially flat connection compatible with the FLRW metric, denoted $\Gamma^{0}$, which is presented in equation (\ref{Connection f(T) k=0}). If spatial curvature is permitted, four distinct classes of connections arise, all of which converge to $\Gamma^{0}$ in limit $\kappa\rightarrow0$. In all cases, the connection is completely determined by the metric, indicating that no additional degrees of freedom are introduced by the connection in this theory.

Connection $\Gamma^0$

\begin{equation} \label{Connection f(T) k=0}
\begin{aligned}
&\Gamma^t_{\hspace{0.5em}\mu\nu}=\bold{0}^{(4\times4)}\hspace{7em}
&\Gamma^r_{\hspace{0.5em}\mu\nu}=\begin{pmatrix} 
0 & H & 0 &0\\
0 & 0 & 0 &0\\
0 & 0 & \frac{1}{r} &0\\
0 & 0 & 0 &-r  \sin^2\theta
\end{pmatrix}\\
&\Gamma^\theta_{\hspace{0.5em}\mu\nu}=\begin{pmatrix}
0 & 0 & H &0\\
0 & 0 & \frac{1}{r} &0\\
0 &  \frac{1}{r}&0 &0\\
0 & 0 & 0 &-\cos\theta\sin\theta
\end{pmatrix}
&\Gamma^\phi_{\hspace{0.5em}\mu\nu}=\begin{pmatrix}
0 & 0 & 0 &H\\
0 & 0 & 0 &\frac{1}{r}\\
0 & 0 & 0 &\cot\theta\\
0 & \frac{1}{r} & \cot\theta &0
\end{pmatrix}\\
\end{aligned}
\end{equation}

In $f(\mathbb{Q})$ gravity, three spatially flat connections are compatible with the symmetry and geometric constraints. {All three contain an unspecified function $\gamma(t)$, analogous to $a(t)$ in the metric.} Whether this function represents a propagating degree of freedom depends on the behavior of the connection field equations.
It turn out that for connection $\Gamma^1$, the connection field equation becomes an identity, and $\gamma(t)$ is left undetermined and can be set to zero. 
However, for the connections $\Gamma^2$ and $\Gamma^3$, $\gamma(t)$ remains dynamical and may correspond to an additional propagating degree of freedom \cite{Heisenberg_cosmology}.


Connection $\Gamma^1$

\begin{equation}
\label{Connection f(Q) k=0 I}
\begin{aligned}
&\Gamma^t_{\hspace{0.5em}\mu\nu}=\begin{pmatrix}
\gamma & 0 & 0 &0\\
0 & 0 & 0 &0\\
0 & 0 & 0 &0\\
0 & 0 & 0 &0
\end{pmatrix}
&\Gamma^r_{\hspace{0.5em}\mu\nu}=\begin{pmatrix}
0 & 0 & 0 &0\\
0 & 0 & 0 &0\\
0 & 0 & -r &0\\
0 & 0 & 0 &-r  \sin^2\theta
\end{pmatrix}\\
&\Gamma^\theta_{\hspace{0.5em}\mu\nu}=\begin{pmatrix}
0 & 0 & 0 &0\\
0 & 0 & \frac{1}{r} &0\\
0 &  \frac{1}{r}&0 &0\\
0 & 0 & 0 &-\cos\theta\sin\theta
\end{pmatrix}
&\Gamma^\phi_{\hspace{0.5em}\mu\nu}=\begin{pmatrix}
0 & 0 & 0 &0\\
0 & 0 & 0 &\frac{1}{r}\\
0 & 0 & 0 &\cot\theta\\
0 & \frac{1}{r} & \cot\theta &0
\end{pmatrix}\\
\end{aligned}
\end{equation}

Connection $\Gamma^2$

\begin{equation}
\label{Connection f(Q) k=0 II}
\begin{aligned}
&\Gamma^t_{\hspace{0.5em}\mu\nu}=\begin{pmatrix}
\gamma+\frac{\dot{\gamma}}{\gamma} & 0 & 0 &0\\
0 & 0 & 0 &0\\
0 & 0 & 0 &0\\
0 & 0 & 0 &0
\end{pmatrix}
&\Gamma^r_{\hspace{0.5em}\mu\nu}=\begin{pmatrix}
0 & \gamma & 0 &0\\
\gamma & 0 & 0 &0\\
0 & 0 & -r &0\\
0 & 0 & 0 &-r  \sin^2\theta
\end{pmatrix}\\
&\Gamma^\theta_{\hspace{0.5em}\mu\nu}=\begin{pmatrix}
0 & 0 & \gamma &0\\
0 & 0 & \frac{1}{r} &0\\
\gamma &  \frac{1}{r}&0 &0\\
0 & 0 & 0 &-\cos\theta\sin\theta
\end{pmatrix}
&\Gamma^\phi_{\hspace{0.5em}\mu\nu}=\begin{pmatrix}
0 & 0 & 0 &\gamma\\
0 & 0 & 0 &\frac{1}{r}\\
0 & 0 & 0 &\cot\theta\\
\gamma & \frac{1}{r} & \cot\theta &0
\end{pmatrix}
\end{aligned} 
\end{equation}

\vspace{2cm}
Connection $\Gamma^3$

\begin{equation}
\begin{aligned}
\label{Connection f(Q) k=0 III}
&\Gamma^t_{\hspace{0.5em}\mu\nu}=\begin{pmatrix}
-\frac{\dot{\gamma}}{\gamma} & 0 & 0 &0\\
0 & \gamma & 0 &0\\
0 & 0 & \gamma r^2 &0\\
0 & 0 & 0 &\gamma r^2 \sin^2\theta
\end{pmatrix}
&\Gamma^r_{\hspace{0.5em}\mu\nu}=\begin{pmatrix}
0 & 0 & 0 &0\\
0 & 0 & 0 &0\\
0 & 0 & -r &0\\
0 & 0 & 0 &-r  \sin^2\theta
\end{pmatrix}\\
&\Gamma^\theta_{\hspace{0.5em}\mu\nu}=\begin{pmatrix}
0 & 0 & 0 &0\\
0 & 0 & \frac{1}{r} &0\\
0 &  \frac{1}{r}&0 &0\\
0 & 0 & 0 &-\cos\theta\sin\theta
\end{pmatrix}
&\Gamma^\phi_{\hspace{0.5em}\mu\nu}=\begin{pmatrix}
0 & 0 & 0 &0\\
0 & 0 & 0 &\frac{1}{r}\\
0 & 0 & 0 &\cot\theta\\
0 & \frac{1}{r} & \cot\theta &0
\end{pmatrix}
\end{aligned}
\end{equation}

\subsection{Friedmann Equations for Teleparallelism}

In $f(\mathbb{T})$ gravity, the imposition of homogeneity and isotropy for a spatially flat universe uniquely determines the affine connection to be that given in (\ref{Connection f(T) k=0}). For this connection, the torsion scalar is found to be
\begin{equation}
    \mathbb{T}=-6H^2
\end{equation}
and the boundary term is given by
\begin{equation}
    B=-6(\dot{H}+3H^{2})\,.
\end{equation}
The components $tt$ and $rr$ of the $f(\mathbb{T})$ metric field equation (\ref{metric equation f(Q) Einstein form}) give the modified Friedmann equations
\begin{subequations}
    \begin{align}
        \frac{f}{2}-\mathbb{T}f_\mathbb{T}=k\rho\\
        -\frac{f}{2}+f_\mathbb{T}(\mathbb{T}-2\dot{H})-2H\partial_{t}\mathbb{T}f_{\mathbb{T}\mathbb{T}}=kp
    \end{align}
\end{subequations}
{It is often convenient to rewrite the action as $f(\mathbb{T})=\mathbb{T}+F(\mathbb{T})$ where $F$ encodes deviations from GR. This formulation allows the Friedmann equations to be expressed in a form closer to that of GR, thereby facilitating comparison between the theories.}
The first Friedmann equation then becomes
\begin{equation}
    3H^{2}=k\rho+F_\mathbb{T}\mathbb{T}-\frac{F}{2}
\end{equation}

In comparison to the standard Friedmann equation in GR (\ref{Friedmann GR}), the function $F(\mathbb{T})$ acts as a source to the matter sector. This motivates the definition of an effective total energy density: \begin{equation}
    k\rho_{eff}\equiv k\rho+F_\mathbb{T}\mathbb{T}-\frac{F}{2}
\end{equation} 
so that the Friedmann equation assumes its standard form
\begin{equation}
    3H^2=k\rho_{eff}
\end{equation}

{This effective fluid is, of course, not a real material field; rather, it indicates that the geometry behaves as if additional matter sources were present.}

We may also define a dark energy density contribution as 
\begin{equation}\label{dark energy f(T)}
    \rho_{DE}=F_\mathbb{T}\mathbb{T}-\frac{F}{2}
\end{equation} 
such that $\rho_{eff}=\rho+\rho_{DE}$. {This term will be the source of an effective cosmological constant. Consequently, expressing the Friedmann equation in the effective fluid formalism facilitates both the reconstruction procedure and the identification of the terms responsible for dark energy.}

In the case of $f(\mathbb{T},B)$ the modified Friedmann equations become
\begin{subequations}
    \begin{align}
        \frac{f}{2}-f_{\mathbb{T}}\mathbb{T}-\frac{1}{2}f_{B}B-3H\dot{f_{B}}&=k\rho \\
        -\frac{f}{2}+f_{\mathbb{T}}\mathbb{T}+\frac{1}{2}f_{B}B-2\dot{H}f_{\mathbb{T}}-2H\dot{f_{\mathbb{T}}}+\ddot{f_{B}}&=kp
    \end{align}
\end{subequations}
Assuming again the decomposition $f(\mathbb{T},B)=\mathbb{T}+F(\mathbb{T},B)$, we obtain the effective energy density
\begin{equation}
    3H^{2}=k\rho_{eff}\equiv k\rho+TF_{\mathbb{T}}-\frac{F}{2}+\frac{1}{2}F_{B}B+3H\dot{F_{B}}
\end{equation}
In this framework, the dark energy contribution reads:
\begin{equation}
\rho_{DE}\equiv TF_{\mathbb{T}}-\frac{F}{2}+\frac{1}{2}F_{B}B+3H\dot{F_{B}}    
\end{equation}

Similarly, for $f(\mathbb{Q},C)$ gravity, with the choice of connection $\Gamma^1$, one finds
\begin{subequations}
    \begin{align}
    Q&=-6H^2\\
    C&=-6(3H^2+\dot H)
    \end{align}
\end{subequations}
and the Friedmann equations become
\begin{subequations}
    \begin{align}
        \frac{f}{2}-f_{\mathbb{Q}}\mathbb{Q}-\frac{1}{2}f_{C}C-3H\dot{f_{C}}&=k\rho \\
        -\frac{f}{2}+f_{\mathbb{Q}}\mathbb{Q}+\frac{1}{2}f_{C}C-2\dot{H}f_{\mathbb{Q}}-2H\dot{f_{\mathbb{Q}}}+\ddot{f_{C}}&=kp
    \end{align}
\end{subequations}
We see that for connection $\Gamma^0$ in TG and $\Gamma^1$ in STG, the structure of the Friedmann equations in  $f(\mathbb{Q},C)$ coincides with $f(\mathbb{T},B)$. This a case of correspondence between connection $\Gamma^0$ and $\Gamma^1$. Two affine connection are said to be correspondent if the non-metricity scalar in $f(\mathbb{Q})$ has the same value as the torsion scalar for $f(\mathbb{T})$ and the equations of motion take the same form regardless of $f$ \cite{wu2024correspondence}. 

However, for the alternative connections $\Gamma^2$ and $\Gamma^3$ additional dynamical degrees of freedom provide a richer cosmology to STG. For the spatially flat FLRW metric, the $f(\mathbb{T})$ solutions are just a subset of $f(\mathbb{Q})$. For the connection $\Gamma^2$, the non-metricity scalar is
\begin{equation}\label{scalar connection 2}
    \mathbb{Q}=\frac{3}{2}(6h\gamma+2\dot{\gamma}-4H^2)
\end{equation}
and the Friedmann Equations are
\begin{subequations}
    \begin{align}\label{Friedmann connection 2}
        \frac{3}{4}f_{\mathbb{Q}}(-6H\gamma-2\dot{\gamma}+8H^2)+\frac{1}{2}(3\gamma f_{\mathbb{Q}\mathbb{Q}}\dot{\mathbb{Q}}+f)&=k\rho\\
        \frac{f_\mathbb{Q}}{4}(18H\gamma+6\dot{\gamma}-8\dot{H}-24H^2)-\frac{1}{2}(f_{\mathbb{Q}\mathbb{Q}}\dot{Q}(4H-3\gamma)+f)&=kp
    \end{align}
\end{subequations}
The only non-identical connection equation is
\begin{equation}\label{connection 2 equation}
    -\frac{3\gamma}{4}(f_{\mathbb{Q}\mathbb{Q}}\dot{Q}6H+2f_{\mathbb{Q}\mathbb{Q}\mathbb{Q}}\dot{\mathbb{Q}}^{2}+2f_{\mathbb{Q}\mathbb{Q}}\ddot{\mathbb{Q}})=0
\end{equation}

For connection $\Gamma^3$, the non-metricity scalar and the metric and connection field equations are
\begin{subequations}
    \begin{align}
    &\mathbb{Q}=\frac{3}{2}(\gamma6H+2\dot{\gamma}-4H^2)\\
        &\frac{3}{4}f_\mathbb{Q}(6H\gamma+2\dot\gamma-8H^2)+\frac{1}{2}(3\gamma f_{\mathbb{Q}\mathbb{Q}}\dot{Q}-f)=-k\rho\\
        &\frac{f_\mathbb{Q}}{4}(18H\gamma+6\gamma-8\dot{H}-24H^2)-\frac{1}{2}(f_{\mathbb{Q}\mathbb{Q}}\dot{Q}(4H-\gamma)+f)=kp\\
        &\frac{3\gamma}{4}(f_{\mathbb{Q}\mathbb{Q}}\dot{Q}10H)+2f_{\mathbb{Q}\mathbb{Q}\mathbb{Q}}\dot{Q}^2+2f_{\mathbb{Q}\mathbb{Q}}\ddot{Q})+3\dot{\gamma}f_{\mathbb{Q}\mathbb{Q}}\dot{Q}=0
    \end{align}
\end{subequations}

These equations were first derived in \cite{Black_Hole}. Since then, analytical and numerical solutions have been explored in \cite{dimakis_degrees_of_freedom, ayuso_degrees_of_freedom}, the stability of the solutions was investigated in \cite{guzman2024exploring} and observation constraints studied in \cite{yang_degress_of_freedom,shi2023degrees}.

\subsection{Reconstruction}

To study the background $\Lambda$CDM cosmology, we apply the reconstruction method, which constructs the gravitational Lagrangian based on a given cosmological behavior stemming from a specific choice of the scale factor and $H(t)$. This method was pioneered in \cite{Nojiri_Odintsov2009}, which showed that for any given scale factor  $a(t)$ and  matter composition of the universe there exists a corresponding $f(R)$ theory that admits $a(t)$ as a solution. Reconstruction of $\Lambda$CDM in the context of $f(R)$ was investigated in \cite{LCDM_Reconstruction}. The reconstruction method has since been extended to a variety of modified theories, including scalar-tensor theories \cite{Scalar_Field}, Gauss-Bonnet gravity \cite{Gauss_Bonnet} and more recently, to TG and STG
\cite{gadbail2025_thesis,Gadbail_2023_Q_T,Gadbail_2023_Q_C,Caruana_2020,nojiri2024,Gadbail_2022_Q,esposito2022}.

The general reconstruction procedure is as follows: given a desired $a(t)$, one computes the relevant scalar quantity ($R$, $\mathbb{T}$ or $\mathbb{Q}$), as a function of $a(t)$, and inverts this relation to express the scale factor as a function of the scalar.
Substituting this into the modified Friedmann equation transforms it into a differential equation for the function $f$. Solving this equation yields the class of gravitational Lagrangians that reproduce the chosen cosmological behavior. In many cases, the reconstructed equations are not analytically solvable or result in very complex Lagrangians, restricting their usefulness beyond the background cosmological scale. The reconstruction of $\Lambda$CDM in the context of $f(R)$ was investigated in \cite{LCDM_Reconstruction}, where it was proved that no non-trivial $f(R)$ reproduces the exact $\Lambda$CDM background for a universe filled with dust-like matter.

\subsubsection{Reconstruction in $f(\mathbb{T})$ gravity}

We want to reproduce the effect of the cosmological constant in $f(\mathbb{T})$ gravity. Using the $\Lambda$CDM differential equation (\ref{LDCM equation}) and the value of the scale factor $\mathbb{T}=-6H^2$, we can write
\begin{equation}
    k\rho=3H^2-\Lambda=-\frac{\mathbb{T}}{2}-\Lambda
\end{equation}
and therefore, the first Friedmann equation becomes
\begin{equation}
\frac{f}{2}-\mathbb{T}f_\mathbb{T}=-\frac{\mathbb{T}}{2}-\Lambda
\end{equation} Introducing the function $F$, previously defined as $f(\mathbb{T})=\mathbb{T}+F(\mathbb{T})$, the Friedmann equation can be written as
\begin{equation}
    \mathbb{T}{F_\mathbb{T}}-\frac{F}{2}=\Lambda
\end{equation}
This equation could have been immediately obtained from $\rho_{DE}=\Lambda$, using the definition of  $\rho_{DE}$ in (\ref{dark energy f(T)}). The general solution is
\begin{equation}
    F(\mathbb{T})=-2\Lambda+\alpha\sqrt{-\mathbb{T}}
\end{equation}
We see that the cosmological constant is still necessary and the additional term $\alpha\sqrt{-\mathbb{T}}$ corresponds to the solution of the homogeneous part of the Friedmann equation and therefore does not influence the background.

\subsubsection{Reconstruction in $f(\mathbb{T},B)$ gravity}

Applying the same strategy of equating $\rho_{DE}=\Lambda$,
the reconstruction equation for $f(\mathbb{T},B)$ is
\begin{equation}\label{aux2}
    TF_{\mathbb{T}}-\frac{F}{2}+\frac{1}{2}F_{B}B+3H\dot{F}_B=\Lambda  
\end{equation}
Reconstruction cannot be straightforwardly applied to a general $F(\mathbb{T},B)$. The reason is that the method, as introduced earlier, relies on expressing all relevant quantities as functions of a single scalar argument of the Lagrangian. This is feasible when there is only one scalar. However, in $f(\mathbb{T},B)$, the torsion scalar and the boundary term are, in principle, independent variables. 
To proceed, one must restrict the functional form of $f(\mathbb{T},B)$ so that the reconstruction equation can be treated consistently without compromising the independence of $\mathbb{T}$ and $B$.
We will use as an ansatz 
\begin{equation}
F(\mathbb{T},B)=g(\mathbb{T})+Bh(\mathbb{T})\,,    
\end{equation}
from which we obtain, when substituting in the reconstruction equation (\ref{aux2})
\begin{equation}\label{reconstruction equation reduced}
    \mathbb{T}g_{\mathbb{T}}-\frac{g}{2}+3\mathbb{T}^{2}h_{\mathbb{T}}=\Lambda
\end{equation}
For any function $g$ there is always a function $h$ solving (\ref{reconstruction equation reduced}) that reconstructs $\Lambda$CDM.
For instance, choosing the simplest case $g(\mathbb{T})=0$, the solution is:
\begin{equation}
    h(\mathbb{T})=-\frac{\Lambda}{3\mathbb{T}}+c_1
\end{equation}
The arbitrary constant $c_1$ is irrelevant as it will corresponds to a term $c_1B$ in the Lagrangian, which is just a boundary term. Without it the resulting Lagrangian reads
\begin{equation}
        f(\mathbb{T},B)=\mathbb{T}-\frac{\Lambda B}{3\mathbb{T}}
\end{equation}
This model reproduces the effect of a cosmological constant through the coupling between B and $\mathbb{T}$ in the term $\frac{\Lambda B}{3\mathbb{T}}$.

The result generalizes for $g(\mathbb{T})=\alpha\mathbb{T}$, yielding
\begin{equation}\label{generalization}
    f(\mathbb{T})=\alpha\mathbb{T}-\frac{B}{3}\left(\frac{\Lambda}{\mathbb{T}}+\frac{\alpha-1}{2}\log\left(-\frac{\mathbb{T}}{\Lambda}\right)\right)
\end{equation}
The chosen ansatz is seen to be highly flexible, allowing for an infinite number of solutions while maintaining their relative simplicity compared to the more complex functions (such as hypergeometric functions) that frequently emerge from reconstruction methods. Moreover, if the function $g$ is free of integration constants, the resulting Lagrangian will never contain a cosmological constant term.

\subsubsection{Reconstruction in $f(\mathbb{Q})$ and $f(\mathbb{Q},C)$ gravity}

The above conclusions apply \textit{mutatis mutandis} to STG, when using the connection $\Gamma^1$: in $f(\mathbb{Q)}$ gravity, one again requires an explicit cosmological constant with the correction $\sqrt{-\mathbb{Q}}$ having no effect on the background; in $f(\mathbb{Q},C)$ the role of the cosmological constant can be mimicked by a term $\frac{C}{\mathbb{Q}}$. From this perspective, TG and STG yield the same cosmological behavior. Consequently, the geometric identity of spacetime (torsion vs. non-metricity) remains observationally indistinguishable. This degeneracy arises due to the specific connections chosen.
However, STG allows for a broader class of connections, including those that introduce additional degrees of freedom. We will use these extra degrees of freedom to reconstruct $\Lambda$CDM in $f(\mathbb{Q)}$.

For that, we start by proving the following result.
\begin{proposition}
\label{theorem constant scalar}
    Given an arbitrary function $f(\mathbb{Q})$, if $\mathbb{Q}$ is constant, the metric field equations of $f(\mathbb{Q})$ gravity reduce to the field equations of General Relativity.
\end{proposition}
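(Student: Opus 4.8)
The plan is to substitute the condition that $\mathbb{Q}$ is constant directly into the $f(\mathbb{Q})$ metric field equation in its Einstein-like form (\ref{metric equation f(Q) Einstein form}), namely $f_{\mathbb{Q}}G_{\mu\nu}-\tfrac{1}{2}g_{\mu\nu}(f-f_{\mathbb{Q}}\mathbb{Q})+2f_{\mathbb{Q}\mathbb{Q}}P^{\alpha}_{\hspace{0.5em}\mu\nu}\partial_{\alpha}\mathbb{Q}=kT_{\mu\nu}$, and to track the two independent simplifications that follow. First I would note that $\mathbb{Q}=\text{const}$ forces $\partial_{\alpha}\mathbb{Q}=0$, so the third term vanishes identically, irrespective of the value of $f_{\mathbb{Q}\mathbb{Q}}$. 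This is precisely the term carrying the higher-derivative, genuinely non-Einsteinian content of the theory, so its disappearance is the essential step.

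Next I would observe that, since $f$ is evaluated at a fixed argument, both $f$ and $f_{\mathbb{Q}}$ collapse to numerical constants; writing $\bar{f}_{\mathbb{Q}}\equiv f_{\mathbb{Q}}$, the combination $f-\bar{f}_{\mathbb{Q}}\mathbb{Q}$ is also a constant. The field equation then reduces to the purely algebraic relation $\bar{f}_{\mathbb{Q}}G_{\mu\nu}-\tfrac{1}{2}(f-\bar{f}_{\mathbb{Q}}\mathbb{Q})g_{\mu\nu}=kT_{\mu\nu}$, which is manifestly second order. To complete the argument I would divide through by $\bar{f}_{\mathbb{Q}}$, assuming $\bar{f}_{\mathbb{Q}}\neq 0$, obtaining $G_{\mu\nu}+\Lambda_{\mathrm{eff}}g_{\mu\nu}=k_{\mathrm{eff}}T_{\mu\nu}$ with $\Lambda_{\mathrm{eff}}=-\tfrac{1}{2\bar{f}_{\mathbb{Q}}}(f-\bar{f}_{\mathbb{Q}}\mathbb{Q})$ and $k_{\mathrm{eff}}=k/\bar{f}_{\mathbb{Q}}$. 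This is exactly the Einstein field equation with a rescaled gravitational coupling and an effective cosmological constant, establishing the claim.

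There is no genuine computational obstacle here; the content is structural rather than technical. The only points requiring care are the mild nondegeneracy assumption $\bar{f}_{\mathbb{Q}}\neq 0$ (automatic near the GR limit $f_{\mathbb{Q}}\to 1$) and the interpretation that the leftover constant in the trace term plays the role of a cosmological constant, so that ``the field equations of General Relativity'' must be understood as possibly including such a term. An identical argument applied to (\ref{metric equation f(T) Einstein form}) yields the same conclusion for $f(\mathbb{T})$, and it extends to the boundary-term theories once the corresponding scalars are held constant --- this is precisely the feature that the subsequent reconstruction of $\Lambda$CDM will exploit through the extra degrees of freedom of connections $\Gamma^2$ and $\Gamma^3$.
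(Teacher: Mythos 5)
Your proposal is correct and follows essentially the same route as the paper: set $\partial_\alpha\mathbb{Q}=0$ to kill the $f_{\mathbb{Q}\mathbb{Q}}$ term, then divide by the constant $f_{\mathbb{Q}}$ to obtain the Einstein equations with a rescaled energy--momentum tensor and the effective cosmological constant $\Lambda_{\mathrm{eff}}=-\tfrac{1}{2}\bigl(f(\mathbb{Q}_0)-f'(\mathbb{Q}_0)\mathbb{Q}_0\bigr)/f'(\mathbb{Q}_0)$, exactly as in equation (\ref{effictive lambda}). Your explicit flagging of the nondegeneracy condition $f_{\mathbb{Q}}\neq 0$ is a minor refinement the paper leaves implicit, not a different argument.
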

\begin{proof}
    Let us denote the constant value of the non-metricity scalar by $\mathbb{Q}_0$. 
    Then, the field equation (\ref{metric equation f(Q) Einstein form}) simplifies to
\begin{equation}
  G_{\mu\nu}-\frac{1}{2}\frac{f(\mathbb{Q}_0)-f'(\mathbb{Q}_0)\mathbb{Q}_0}{f'(\mathbb{Q}_0)}g_{\mu\nu}=k\frac{T_{\mu\nu}}{f'(\mathbb{Q}_0)}  
\end{equation}
This corresponds to the Einstein Field Equations with  rescaled energy-momentum tensor and effective cosmological constant given by:
\begin{subequations}
\begin{align}
\tilde{T}^{\mu\nu}&=\frac{T_{\mu\nu}}{f'(\mathbb{Q}_0)}\\
\Lambda_{eff}&=-\frac{1}{2}\frac{f(\mathbb{Q}_0)-f'(\mathbb{Q}_0)\mathbb{Q}_0}{f'(\mathbb{Q}_0)} \label{effictive lambda}
\end{align}
\end{subequations}
\end{proof}
This result is also valid for $f(T)$ gravity. A consequence of this result is the following
\begin{proposition}
Given a function $f(\mathbb{Q})$, if there is a constant $\mathbb{Q}_0$ such that $f'(\mathbb{Q}_0)=1$ and ${\mathbb{Q}_0}-f(\mathbb{Q}_0)>0$ then in a universe filled with pressureless matter, the Friedmann equation becomes $3H^2=k\rho+\Lambda_{eff}$, with $\Lambda_{eff}=\frac{{\mathbb{Q}_0}-f(\mathbb{Q}_0)}{2}$.    
\end{proposition}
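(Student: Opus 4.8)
The plan is to derive this statement as a corollary of Proposition~\ref{theorem constant scalar}, specialized to the normalization $f'(\mathbb{Q}_0)=1$. First I would posit a cosmological configuration in which the non-metricity scalar is frozen at the constant value $\mathbb{Q}=\mathbb{Q}_0$. Granting this, Proposition~\ref{theorem constant scalar} immediately recasts the $f(\mathbb{Q})$ metric field equations as Einstein's equations sourced by the rescaled tensor $\tilde T_{\mu\nu}=T_{\mu\nu}/f'(\mathbb{Q}_0)$ together with the effective cosmological constant of equation (\ref{effictive lambda}), so the remainder of the argument is just a specialization of these two quantities.

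The core of the proof is then a single substitution. Setting $f'(\mathbb{Q}_0)=1$ makes the rescaling trivial, so $\tilde T_{\mu\nu}=T_{\mu\nu}$ and matter couples to geometry with exactly the GR coefficient $k$; simultaneously the effective cosmological constant of (\ref{effictive lambda}) collapses to $\Lambda_{eff}=\tfrac{1}{2}(\mathbb{Q}_0-f(\mathbb{Q}_0))$, which is the claimed value and is strictly positive by the hypothesis $\mathbb{Q}_0-f(\mathbb{Q}_0)>0$. Inserting the spatially flat ($\kappa=0$) FLRW metric and the pressureless fluid $T^{\nu}_{\mu}=\mathrm{diag}(-\rho,0,0,0)$ into the recovered Einstein equations then yields the standard first Friedmann equation $3H^{2}=k\rho+\Lambda_{eff}$, completing the derivation.

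The step that genuinely requires care — and which I regard as the crux — is justifying that a configuration with $\mathbb{Q}$ exactly constant can coexist with a genuinely evolving $H(t)$, as needed for a dust-dominated history rather than pure de~Sitter. For the flat connection $\Gamma^{0}$ (equivalently $\Gamma^{1}$) one has $\mathbb{Q}=-6H^{2}$, so constancy of $\mathbb{Q}$ would force $H$ constant; the resolution is to employ the connections $\Gamma^{2}$ or $\Gamma^{3}$, whose extra function $\gamma(t)$ enters the scalar, for instance $\mathbb{Q}=\tfrac{3}{2}(6H\gamma+2\dot\gamma-4H^{2})$ for $\Gamma^{2}$ in (\ref{scalar connection 2}). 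Demanding $\mathbb{Q}=\mathbb{Q}_0$ then reduces to the linear first-order equation $\dot\gamma+3H\gamma=2H^{2}+\mathbb{Q}_0/3$, which admits a solution $\gamma(t)$ for any prescribed $H(t)$, so the freezing of $\mathbb{Q}$ is always attainable. Finally, since $\dot{\mathbb{Q}}=\ddot{\mathbb{Q}}=0$ along this configuration, the connection field equation (\ref{connection 2 equation}) is satisfied identically, confirming that the reduction to General Relativity is fully consistent and not merely formal.
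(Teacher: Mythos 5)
Your proposal is correct, and it takes the route that the paper only mentions in passing rather than the one it actually executes. The paper's own proof works directly with the explicit component Friedmann equations for the connection $\Gamma^{2}$: it sets $\mathbb{Q}=\mathbb{Q}_0$, notes that the connection field equation (\ref{connection 2 equation}) is then trivially satisfied, solves (\ref{scalar connection 2}) for the combination $6H\gamma+2\dot\gamma$ in terms of $\mathbb{Q}_0$ and $H$, and substitutes into (\ref{Friedmann connection 2}) to obtain $3H^2=k\rho+\tfrac{1}{2}(\mathbb{Q}_0-f(\mathbb{Q}_0))$ after imposing $f'(\mathbb{Q}_0)=1$. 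You instead derive the result as a corollary of Proposition~\ref{theorem constant scalar}, specializing the covariant reduction to GR and the effective constant (\ref{effictive lambda}) to the normalization $f'(\mathbb{Q}_0)=1$; the paper itself remarks immediately after its proof that the result ``can be viewed as a direct consequence of that more general property,'' so your deduction is exactly that alternative. The two approaches buy different things: the paper's direct computation makes visible how the $\gamma$-dependent terms cancel in the component equations, while your covariant route is shorter and makes clear that the $tt$ equation must be the standard Friedmann equation with $\Lambda_{eff}$ once the metric sector has collapsed to GR. You also add a step the paper leaves implicit and which is genuinely the crux: recasting the constancy condition as the linear first-order equation $\dot\gamma+3H\gamma=2H^{2}+\mathbb{Q}_0/3$, which guarantees a suitable $\gamma(t)$ exists for \emph{any} prescribed $H(t)$ — the paper only rewrites the combination without arguing solvability. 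Your separate check that the connection field equation holds because every term in (\ref{connection 2 equation}) carries a factor of $\dot{\mathbb{Q}}$ or $\ddot{\mathbb{Q}}$ is also necessary (Proposition~\ref{theorem constant scalar} speaks only to the metric equations) and matches the paper's reasoning.
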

\begin{proof}
We demonstrate this for connection $\Gamma^2$. The proof proceeds identically for $\Gamma^3$. First, if we choose $\mathbb{Q}=\mathbb{Q}_0$ then the connection field equation (\ref{connection 2 equation}) is trivially satisfied, regardless of $f$ and $\gamma$. From (\ref{scalar connection 2}), we express $\gamma$ in terms of $\mathbb{Q}_0$ and $H$
\begin{equation}
    \frac{2}{3}\mathbb{Q}_0+4H^2=6H\gamma+2\dot{\gamma}
\end{equation}
Substituting into the Friedmann equation (\ref{Friedmann connection 2}),we get
\begin{equation}
    \frac{3}{4}f'(\mathbb{Q}_0)(-\frac{2}{3}\mathbb{Q}_0+4H^2)+\frac{1}{2}f=k\rho
\end{equation}
This simplifies to
\begin{equation}
    \frac{f}{2}-\frac{f'(\mathbb{Q}_0)\mathbb{Q}_0}{2}+3H^2f'(\mathbb{Q}_0)=k\rho
\end{equation}
Using $f'(\mathbb{Q}_0)=1$, the equation becomes
\begin{equation}
    3H^2=k\rho+\frac{\mathbb{Q}_0-f(\mathbb{Q}_0)}{2}
\end{equation}
Thus, the function $f(\mathbb{Q})$ reproduces $\Lambda$CDM evolution with an effective cosmological constant $\Lambda_{eff}=\frac{\mathbb{Q}_0-f(\mathbb{Q}_0)}{2}$.    
\end{proof}

This matches the effective cosmological constant derived previously in proposition \ref{theorem constant scalar}. Hence, this reconstruction result can be viewed as a direct consequence of that more general property.
However, this reconstruction is only possible when the chosen connection admits a constant $\mathbb{Q}$ without constraining the Hubble parameter.
In particular, such reconstruction cannot be achieved using connection $\Gamma^1$. In that case, constant $\mathbb{Q}$ implies constant $H$, which corresponds to the de Sitter space. Therefore, the full $\Lambda$CDM behavior cannot be recovered.
This conclusion is made explicit by the Friedmann equation for $f(\mathbb{Q})$ gravity with constant $\mathbb{Q}$ under connection $\Gamma^1$, which reads:
\begin{equation}
   \frac{f(\mathbb{Q}_0)}{2}-\mathbb{Q}_0f'(\mathbb{Q}_0)=k\rho 
\end{equation}
Since the left-hand side is constant, the energy density $\rho$ must also be constant. However, energy conservation in an expanding universe implies that $\rho$ must decay over time unless it is zero.
Therefore, for connection $\Gamma^1$, constant $\mathbb{Q}$ is incompatible with $\Lambda$CDM. On the other hand, for connection $\Gamma^2$ and $\Gamma^3$, the presence of an extra degree of freedom allows one to maintain $\mathbb{Q}$ constant, without constraining $H$. 

In summary, multiple modified theories can exactly reproduce the background dynamics of $\Lambda$CDM expansion: distinguishing these theories from General Relativity requires studying perturbations such as the growth of structure, gravitational wave propagation, and anisotropies in the cosmic microwave background. Theories which coincide at the background level may yield observationally distinct signatures at the perturbative level. Moreover, these theories must also pass local tests, a topic we address in the next section.

\section{\label{chapter solar}Solar System Tests}

GR has been remarkably successful in describing gravitational phenomena within the Solar System. A cornerstone of this success is the Schwarzschild solution, which is the unique spherically symmetric and static vacuum solution to the Einstein field equations. This uniqueness is formally established by the Birkhoff theorem \cite{Birkhoff1923-BIRRAM-3}, which states that any spherically symmetric vacuum solution in GR must be the Schwarzschild solution, even if not static a priori. Therefore, any viable modification to GR must reproduce the Schwarzschild solution (or a spacetime observationally indistinguishable from it) in regimes where it has been tested with high precision.

The Schwarzschild metric is
\begin{equation}
    ds^2=-\left(1-\frac{2GM}{r}\right)dt^2+\frac{1}{1-\frac{2GM}{r}}dr^2+r^2d\Omega^2
\end{equation}
and its linearized form given by
\begin{equation}
    ds^2=-\left(1-\frac{2GM}{r}\right)dt^2+\left(1+\frac{2GM}{r}\right)dr^2+r^2d\Omega^2
\end{equation}
We can parametrize deviations from GR by introducing the Eddington parameter $\gamma$ into the linearized metric, leading to the more general form
\begin{equation}\label{Eddington Parameter}
     ds^2=-\left(1-\frac{2GM}{r}\right)dt^2+\left(1+\gamma\frac{2GM}{r}\right)dr^2+r^2d\Omega^2
\end{equation}
{This parameter $\gamma$ influences several physical phenomena, such as the bending of light and the Shapiro time delay.
Given the metric (\ref{Eddington Parameter}), a photon passing at a distance $d$ from the Sun and forming an angle $\Phi$ between the Earth–Sun line and the incoming trajectory (with $\Phi=0$ corresponding to a source directly behind the Sun) is deflected by an angle:
\begin{equation}
    \delta\theta=\frac{1+\gamma}{2}\frac{4GM}{d}\frac{1+cos\Phi}{2}
\end{equation}
Similarly, a radar signal sent across the Solar System, passing near the Sun and reflected by a planet or satellite back to Earth, experiences an additional non-Newtonian delay in its round-trip travel time, given by
\begin{equation}
    \delta t=2(1+\gamma)M\ln\left(\frac{(r_S+\bar{r}_S\cdot\bar{n})(r_e-\bar{r}_e\cdot\bar{n})}{d^2}\right)
\end{equation}
where $\bar{e}_S$ and $\bar{r}_S$ are the position vectors from the Sun to the source and from the Sun to the Earth, respectively, and $\bar{n}$ the vector from the source to Earth.}

Precise measurements of these phenomena thus offer stringent constraints on $\gamma$ and, by extension, on deviations from GR. General Relativity predicts $\gamma = 1$, a value strongly supported by current experimental data. For instance, Very Long Baseline Interferometry yields $\gamma - 1 = (-1.6 \pm 1.5) \times 10^{-4}$ \cite{lambert2009vlbi}, while time-delay measurements from the Cassini spacecraft report $\gamma - 1 = (2.1 \pm 2.3) \times 10^{-5}$ \cite{Bertotti_delay}.

Many modified gravity theories, particularly those proposed to explain the observed acceleration of the Universe without invoking dark energy, struggle to recover the correct Solar System phenomenology. A notable example is the model proposed by Carroll et al. \cite{Carroll_2004}, in which the gravitational action is given by:
\begin{equation}
    f(R)=R-\frac{\mu^4}{R}
\end{equation}
{where $\mu$ is a new parameter with the dimensions of mass.}
This theory successfully predicted an accelerated cosmological expansion. However, it was later shown by Erickcek et al. \cite{erickcek2006} that it fails to reproduce Solar System observations. Although the SdS metric is indeed a spherically symmetric vacuum solution in this theory, it is not the unique solution. To determine the physically relevant solution, one must match the vacuum exterior to a solution for the interior of a matter distribution, such as a star. Erickcek et al. showed that when this matching is performed, the theory predicts $\gamma=\frac{1}{2}$. This result for a more general $f(R)$ was analyzed at \cite{chiba2007solar}.
In this section, we perform a similar analysis for the class of modified gravity theories introduced in the previous chapter and then generalize the result for $f(T,B)$.

\subsection{Symmetry Reduction of the Connection}

By a procedure analogous to that described in section \ref{reduction41}, the most general static and spherically symmetric metric and connection compatible with the geometric postulates of teleparallel gravity theories were determined in \cite{Black_Hole}. The most general static and spherically symmetric metric, allowing for potential off-diagonal terms, is given by
\begin{equation}\label{metric spheric big}
ds^{2}=g_{tt}dt^{2}+g_{tr}dtdr+g_{rr}dr^{2}+g_{\theta\theta}d\theta^{2}+g_{\theta\theta}\sin^{2}\theta d\phi^{2}    
\end{equation}
The general metric (\ref{metric spheric big}) can be brought into the diagonal form
\begin{equation}\label{spheric metric diagonal}
ds^{2}=g_{tt}dt^{2}+g_{rr}dr^{2}+r^{2}d\Omega^{2}    
\end{equation}
via an appropriate coordinate transformation. 
It was proven in \cite{Black_Hole}, that this coordinate transformation does not affect the connections and therefore we can use the diagonal form of the metric.

In $f(\mathbb{T)}$ gravity, the most general flat, metric-compatible spherically symmetric and static connection is
\begin{equation} \label{general f(T) connection}
\begin{aligned}
&\Gamma^t_{\hspace{0.5em}\mu\nu}=\begin{pmatrix} 
0 & 0 & 0 &0\\
\frac{\partial_rg_{tt}}{2g_{tt}} & \Gamma^t_{rr} & 0 &0\\
0 & 0 & \Gamma^t_{\theta\theta} &\frac{\Gamma^r_{\theta\phi}}{\Gamma^r_{\theta\theta}}\sin\theta\Gamma^t_{\theta\theta}\\
0 & 0 & -\frac{\Gamma^r_{\theta\phi}}{\Gamma^r_{\theta\theta}}\sin\theta\Gamma^t_{\theta\theta} & \sin^2\theta\Gamma^t_{\theta\theta}
\end{pmatrix}\\
&\Gamma^r_{\hspace{0.5em}\mu\nu}=\begin{pmatrix} 
0 & 0 & 0 &0\\
-\frac{g_{tt}}{g_{rr}}\Gamma^t_{rr} & \frac{\partial_rg_{rr}}{2g_{rr}} & 0 &0\\
0 & 0 & \Gamma^r_{\theta\theta} &\Gamma^r_{\theta\phi}\\
0 & 0 & -\Gamma^r_{\theta\phi} & \sin^2\theta\Gamma^r_{\theta\theta}
\end{pmatrix}\\
&\Gamma^\theta_{\hspace{0.5em}\mu\nu}=\begin{pmatrix} 
0 & 0 & 0 &0\\
0 & 0 & \frac{1}{r} &\Gamma^\theta_{r\phi}\\
-\frac{g_{tt}}{r^2}\Gamma^t_{\theta\theta} & -\frac{g_{rr}}{r^2}\Gamma^r_{\theta\theta} & 0 &0\\
-\frac{g_{tt}}{r^2}\Gamma^t_{\phi\theta} & \frac{g_{rr}}{r^2}\Gamma^r_{\theta\phi} & 0 & -\cos\theta\sin\theta
\end{pmatrix}\\
&\Gamma^\phi_{\hspace{0.5em}\mu\nu}=\begin{pmatrix}0 & 0 & 0 &0\\
0 & 0 & -\frac{1}{\sin^2\theta}\Gamma^\theta_{r\phi} &\frac{1}{r}\\
-\frac{g_{tt}}{\sin^2\theta r^2 \Gamma^r_{\theta\theta}}\Gamma^r_{\theta\phi}\Gamma^t_{\theta\theta} & -\frac{g_{rr}}{\sin^2\theta r^2 }\Gamma^r_{\theta\phi} & 0 &\cot\theta\\
-\frac{g_{tt}}{r^2}\Gamma^t_{\theta\theta} & -\frac{g_{rr}}{r^2}\Gamma^r_{\theta\theta} & \cot\theta & 0
\end{pmatrix}
\end{aligned}
\end{equation}

The components that were not specified are given by
\begin{subequations}
\begin{align}
\Gamma^t_{rr}&=\pm\frac{r({\Gamma^r_{\theta\theta}}^2+\frac{1}{\sin^2\theta}{\Gamma^r_{\theta\phi}}^2)\partial_rg_{rr}-2g_{rr}({\Gamma^r_{\theta\theta}}^2+\frac{1}{\sin^2\theta}{\Gamma^r_{\theta\phi}}^2-r\Gamma^r_{\theta\theta}\partial_r\Gamma^r_{\theta\theta}-r\frac{1}{\sin^2\theta}\Gamma^r_{\theta\phi}\partial_r\Gamma^r_{\theta\phi})}{2r\sqrt{g_{tt}}\sqrt{{\Gamma^r_{\theta\theta}}^2+\frac{1}{\sin^2\theta}{\Gamma^r_{\theta\phi}}^2}\sqrt{r^2-g_{rr}({\Gamma^r_{\theta\theta}}^2+\frac{1}{\sin^2\theta}{\Gamma^r_{\theta\phi}}^2)}}\\
\Gamma^t_{\theta\theta}&=\pm\frac{\Gamma^r_{\theta\theta}\sqrt{r^2-g_{rr}({\Gamma^r_{\theta\theta}}^2+\frac{1}{\sin^2\theta}{\Gamma^r_{\theta\phi}}^2)}}{\sqrt{g_{tt}}\sqrt{{\Gamma^r_{\theta\theta}}^2+\frac{1}{\sin^2\theta}{\Gamma^r_{\theta\phi}}^2}}\\
\Gamma^\theta_{r\phi}&=\frac{\Gamma^r_{\theta\theta}\partial_r\Gamma^r_{\theta\phi}-\Gamma^r_{\theta\phi}\partial_r\Gamma^r_{\theta\theta}}{{\Gamma^r_{\theta\theta}}^2+\frac{1}{\sin^2\theta}{\Gamma^r_{\theta\phi}}^2}
\end{align}
\end{subequations}
The only free components of the connection are $\Gamma^r_{\hspace{0.5em}\theta\theta}$ and $\Gamma^r_{\hspace{0.5em}\theta\phi}$. Using the non-diagonal $rt$ component of the metric field equation (\ref{metric equation f(T) Einstein form}),
\begin{equation}
f_{\mathbb{T}\mathbb{T}}\partial_r\mathbb{T}S_{(rt)}^{{\hspace{1.5em}r}}=0  
\end{equation}
Since $\mathbb{T}$ being constant leads to General Relativity, the only way to get new solutions is if the equation is an identity, choosing a connection for which $S_{(rt)}^{{\hspace{1.5em}r}}=0$. It was proven in \cite{Black_Hole} that the only way to have solutions beyond GR is by setting 
\begin{equation}
    \Gamma^r_{\hspace{0.7em}\theta\theta}=\pm\frac{r}{\sqrt{g_{rr}}}\hspace{1em}\Gamma^r_{\hspace{0.7em}\theta\phi}=0
\end{equation}
{To these two possible choices correspond the connections $\Gamma^{\pm}$:}
\begin{equation}
\label{connection f(T)}
\begin{aligned}
&\Gamma^t_{\hspace{0.5em}\mu\nu}=\begin{pmatrix} 
0 & 0 & 0 &0\\
\frac{\partial_rg_{tt}}{2g_{tt}} & 0 & 0 &0\\
0 & 0 & 0 &0\\
0 & 0 & 0 & 0
\end{pmatrix}
&\Gamma^r_{\hspace{0.5em}\mu\nu}=\begin{pmatrix} 
0 & 0 & 0 &0\\
0 & \frac{\partial_rg_{rr}}{2g_{rr}} & 0 &0\\
0 & 0 & \pm \frac{r}{\sqrt{g_{rr}}} &0\\
0 & 0 & 0 & \pm \frac{r}{\sqrt{g_{rr}}}\sin^2\theta
\end{pmatrix}\\
&\Gamma^\theta_{\hspace{0.5em}\mu\nu}=\begin{pmatrix} 
0 & 0 & 0 &0\\
0 & 0 & \frac{1}{r} & 0\\
c & \mp \frac{\sqrt{g_{rr}}}{r} & 0 &0\\
0 & 0 & 0 &-\cos\theta\sin\theta
\end{pmatrix}
&\Gamma^\phi_{\hspace{0.5em}\mu\nu}=\begin{pmatrix} 
0 & 0 & 0 &0\\
0 & 0 & 0 & \frac{1}{r}\\
0 & 0 & 0 &\cot\theta\\
0 & \mp \frac{\sqrt{g_{rr}}}{r} & \cot\theta &0
\end{pmatrix}
\end{aligned}
\end{equation}

\subsection{Physical Viability of the Connection}

When these connections were first determined in \cite{Black_Hole}, one of the main motivations was to find new black hole solutions that depart from GR. However, here we would like to keep the standard SdS solution to keep our theories consistent with solar system observation. We will argue that some physical criteria rule out connections $\Gamma^\pm$ and therefore lead to the SdS metric.

It is reasonable to expect that in the Minkowski limit all effects of gravity should vanish. In particular, both the torsion scalar and torsion tensor are expected to be zero in this limit. Computing the torsion scalar for the connections $\Gamma^{\pm}$ (\ref{connection f(T)}) we get
\begin{equation}
    \mathbb{T}^\pm=\left(1\pm\sqrt{g_{rr}}\right)\frac{2}{r^{2}g_{rr}}\left(\frac{r\partial_{r}g_{tt}}{g_{tt}}+1\pm\sqrt{g_{rr}}\right)
\end{equation}
In the case of the Minkowski metric
\begin{equation}
    \mathbb{T}^{+}=\frac{8}{r^2}\hspace{2em}\mathbb{T}^{-}=0
\end{equation}
We see that for connection $\Gamma^+$, the value of $\mathbb{T}$ for the Minkowski metric does not vanish. Only $\Gamma^-$ satisfies the natural expectation that the torsion will vanish in the Minkowski limit. It is then plausible that only this second connection is physically meaningful.

This type of problem was point out by Bahamonde et al. \cite{bahamonde2022coincident} in the context of STG. Nevertheless, these authors also caution that the vanishing of the scalar in the Minkowski limit may not be strictly necessary. Since matter follows Levi-Civita geodesics, particles are sensitive only to the metric, not to the affine structure. Therefore, as long as the metric is flat, any non-trivial geometry in the background may not be detectable. However, when considering the cosmological limit, the adequacy of these connections must also be re-examined. The de Sitter metric, when expressed in FLRW coordinates, is manifestly homogeneous and isotropic. Nonetheless, a coordinate transformation reveals that it is also spherically symmetric and static. In FLRW coordinates, the de Sitter metric takes the form
\begin{equation}
    ds^2=-dT^2+\exp(2HT)[dR^2+R^2d\Omega^2]
\end{equation}
while in static Schwarzschild-like coordinates it reads
\begin{equation}\label{metric de sitter static coordinates}
    ds^2=-(1-H^2r^2)dt^2+\frac{1}{1-H^2r^2}dr^2+r^2d\Omega^2
\end{equation}
These two forms are related via the coordinate transformation
\begin{subequations}
\label{coordinate transformation De Sitter}
\begin{align}
    r&=\exp(HT)R  \\
    t&=T-\frac{1}{2H}\log(1-H^2r^2)
\end{align}
\end{subequations}

Since de Sitter metric possesses both symmetries (homogeneity/isotropy and spherical/static), it is natural to expect that the homogeneous and isotropic connection used in cosmology should, under this coordinate transformation, match one of the spherically symmetric and static connections. In particular, the torsion scalar should remain invariant across these representations of the same space. From the cosmological analysis, we found that $\mathbb{T}=-6H^2$, which correctly vanishes in the Minkowski limit. So the connection $\Gamma^+$ which did not have the correct Minkowski limit also does not have the correct de Sitter limit. One concrete manifestation of this inconsistency is that the term $\sqrt{-\mathbb{T}}$ that comes from cosmological reconstruction is not even defined for connection $\Gamma^+$ as the torsion scalar background value is positive. 
While $\Gamma^-$ avoids this issue in the Minkowski limit, it too becomes problematic in the de Sitter regime. If we compute the torsion scalar for $\Gamma^-$, in the de Sitter metric as a series in $H^2$, we find
\begin{equation}
    \mathbb{T}=\frac{5}{2}H^4r^2+\mathcal{O}(H^6r^4)
\end{equation}
which is incompatible with $\mathbb{T}=-6H^2$.

We will formalize this concept in the following definition: a spherically symmetric and static connection is said to be de Sitter compatible if, when the metric is the de Sitter metric, the connection can be obtained from a homogeneous and isotropic connection via the coordinate transformation (\ref{coordinate transformation De Sitter}).
{Instead of merely requiring that the metric and connection are both spherically symmetric and static, we also impose that, if the metric possesses additional symmetries—as is the case for the de Sitter metric—the connection must share these symmetries. Ensuring compatibility with the de Sitter metric is, of course, trivial in General Relativity, since the Levi-Civita connection is fully determined by the metric and therefore automatically inherits all its symmetries.
In what follows, we will examine the existence of such connections in teleparallel theories and use them to obtain the desired results.} 

\subsection{Solar Systems Test in Teleparallelism}

The cosmological connection $\Gamma^0$ for de Sitter spacetime, expressed in FLRW coordinates, is given by
\begin{equation} \label{Connection f(T) k=0 de sitter}
\begin{aligned}
&\Gamma^t_{\hspace{0.5em}\mu\nu}=\bold{0}^{(4\times4)}\hspace{7em}
&\Gamma^r_{\hspace{0.5em}\mu\nu}=\begin{pmatrix} 
0 & H & 0 &0\\
0 & 0 & 0 &0\\
0 & 0 & -R &0\\
0 & 0 & 0 &-R \sin^2\theta
\end{pmatrix}\\
&\Gamma^\theta_{\hspace{0.5em}\mu\nu}=\begin{pmatrix}
0 & 0 & H &0\\
0 & 0 & \frac{1}{R} &0\\
0 &  \frac{1}{R}&0 &0\\
0 & 0 & 0 &-\cos\theta\sin\theta
\end{pmatrix}
&\Gamma^\phi_{\hspace{0.5em}\mu\nu}=\begin{pmatrix}
0 & 0 & 0 &H\\
0 & 0 & 0 &\frac{1}{R}\\
0 & 0 & 0 &\cot\theta\\
0 & \frac{1}{R} & \cot\theta &0
\end{pmatrix}  
\end{aligned}
\end{equation}
where $H$ is constant.
We now apply the coordinate transformation (\ref{coordinate transformation De Sitter}) to this connection.
Connections transform non-homogeneously following the transformation rule
\begin{equation}
\tilde{\Gamma}_{\mu\nu}^{\alpha}=\partial_{\beta}x^{\alpha}\partial_{\mu}x^{\rho}\partial_{\nu}x^{\sigma}\Gamma_{\rho\sigma}^{\beta}+\partial_{\lambda}x^{\alpha}\partial_{\mu}\partial_{\nu}x^{\lambda}
\end{equation}
Applying this to the connection (\ref{Connection f(T) k=0 de sitter}), we obtain the transformed connection in Schwarzschild coordinates
{\small
\begin{equation} \label{Connection f(T) Transformed}
\begin{aligned}
&\Gamma^t_{\hspace{0.5em}\mu\nu}=\begin{pmatrix} 
0 & 0 & 0 &0\\
\frac{-H^2r}{1-H^2r^2} & \frac{-H}{(1-H^2r^2)^2} & 0 &0\\
0 & 0 & \frac{-Hr^2}{1-H^2r^2} &0\\
0 & 0 & 0 &\frac{-Hr^2}{1-H^2r^2} \sin^2\theta
\end{pmatrix}\\
&\Gamma^r_{\hspace{0.5em}\mu\nu}=\begin{pmatrix} 
0 & 0 & 0 &0\\
-H & \frac{H^2r}{1-H^2r^2} & 0 &0\\
0 & 0 & -r &0\\
0 & 0 & 0 &-r \sin^2\theta
\end{pmatrix}\\
&\Gamma^\theta_{\hspace{0.5em}\mu\nu}=\begin{pmatrix}
0 & 0 & 0 &0\\
0 & 0 & \frac{1}{r} &0\\
-H &  \frac{1}{r}\frac{1}{1-H^2r^2}&0 &0\\
0 & 0 & 0 &-\cos\theta\sin\theta
\end{pmatrix}\\
&\Gamma^\phi_{\hspace{0.5em}\mu\nu}=\begin{pmatrix}
0 & 0 & 0 &0\\
0 & 0 & 0 &\frac{1}{r}\\
 &  \frac{1}{r}\frac{1}{1-H^2r^2}&0 &\cot\theta\\
-H & \frac{1}{r}\frac{1}{1-H^2r^2} & \cot\theta &0
\end{pmatrix}
\end{aligned}
\end{equation}}

In particular, we have $\Gamma^r_{\theta\theta}=-r$ and $\Gamma^r_{\theta\phi}=0$.
These are the two degrees of freedom that define the most general static and spherically symmetric connection in TG (\ref{general f(T) connection}). If we substitute them in connection (\ref{general f(T) connection}), we obtain a de Sitter compatible connection
\begin{equation} \label{connection f(T) compatible}
\begin{aligned}
&\Gamma^t_{\hspace{0.5em}\mu\nu}=\begin{pmatrix} 
0 & 0 & 0 &0\\
\frac{\partial_rg_{tt}}{2g_{tt}} & -\frac{\partial_{r}g_{rr}}{2\sqrt{-g_{tt}}\sqrt{g_{rr}-1}} & 0 &0\\
0 & 0 &  -r\frac{\sqrt{g_{rr}-1}}{\sqrt{-g_{tt}}} &0\\
0 & 0 & 0 & -r\sin^2 \frac{\sqrt{g_{rr}-1}}{\sqrt{-g_{tt}}}
\end{pmatrix}\\
&\Gamma^r_{\hspace{0.5em}\mu\nu}=\begin{pmatrix} 
0 & 0 & 0 &0\\
-\frac{\partial_{r}g_{rr}}{2g_{rr}}\frac{\sqrt{-g_{tt}}}{\sqrt{g_{rr}-1}} & \frac{\partial_rg_{rr}}{2g_{rr}} & 0 &0\\
0 & 0 & -r &0\\
0 & 0 &  & -r\sin^2\theta
\end{pmatrix}\\
&\Gamma^\theta_{\hspace{0.5em}\mu\nu}=\begin{pmatrix} 
0 & 0 & 0 &0\\
0 & 0 & \frac{1}{r} &0\\
-\frac{\sqrt{-g_{tt}}\sqrt{g_{rr}-1}}{r} & \frac{g_{rr}}{r} & 0 &0\\
0 & 0 & 0 & -\cos\theta\sin\theta
\end{pmatrix}\\
&\Gamma^\phi_{\hspace{0.5em}\mu\nu}=\begin{pmatrix} 
0 & 0 & 0 &0\\
0 & 0 & 0 &\frac{1}{r}\\
0 & 0 & 0 &\cot\theta\\
-\frac{\sqrt{-g_{tt}}\sqrt{g_{rr}-1}}{r} & \frac{g_{rr}}{r} & \cot\theta & 0
\end{pmatrix}
\end{aligned}
\end{equation}
For the de Sitter metric, this connection (\ref{connection f(T) compatible}) reproduces the transformed connection (\ref{Connection f(T) Transformed}). The torsion scalar for this connection (\ref{connection f(T) compatible}) reads
\begin{equation}
\mathbb{T}=-\frac{2}{r^2g_{rr}g_{tt}}[r(1-g_{rr})\partial_rg_{tt}+g_{tt}(1-g_{rr}-r\partial_rg_{rr})]
\end{equation}
which evaluates to $\mathbb{T}=-6H^2$ in the de Sitter spacetime, confirming consistency.

As discussed earlier, the connections $\Gamma^{\pm}$ are the only ones in $f(\mathbb{T})$ that allow deviations from GR. The requirement of a de Sitter compatible connection imposes that $\mathbb{T}$ must be constant, inevitably leading to the SdS solution. However, this restriction does not necessarily apply to $f(\mathbb{T},B)$ gravity. While the considerations made for de Sitter compatibility are theory-independent, the exclusivity of $\Gamma^{\pm}$ in enabling deviations from GR is a particular feature of $f(\mathbb{T})$.

In $f(\mathbb{T},B)$ gravity, the geometric and symmetry postulates are unchanged, leading to the same class of admissible connections as in $f(\mathbb{T})$.
The field equations for $f(\mathbb{T},B)$ gravity are
\begin{equation}\tag{\ref{equacao f(T,B)}}
    G_{\mu\nu}f_{T}-\frac{1}{2}g_{\mu\nu}(f-f_{T}T-f_{B}B)+2\partial_{\alpha}(f_{T}+f_{B})S_{(\mu\nu)}^{\hspace{1em}\alpha}+\nabla_{\mu}\nabla_{\nu}f_{B}-g_{\mu\nu}\square f_{B}=kT_{\mu\nu}
\end{equation}
Evaluating the $rt$ component of the field equations using the de Sitter compatible connection yields
\begin{equation}
        \partial_{r}(f_{\mathbb{T}}+f_{B})\frac{\sqrt{g_{rr}-1}}{\sqrt{-g_{tt}}r}=0\label{rt1f(T,B)}
\end{equation}
For this choice of connection, this is not an identity, and therefore imposes the constraint
\begin{equation}\label{rt2f(T,B)}
    \partial_{r}(f_{\mathbb{T}}+f_{B})=0
\end{equation}
But now, contrary to what happens in $f(\mathbb{T})$ gravity, where this equation implied $\partial_r\mathbb{T}=0$ and this for its turn implied that we cannot have solutions beyond GR, this equation can be satisfied without the scalars $\mathbb{T}$ and $B$ necessarily being constant. Instead, it simply requires a relation between them, potentially allowing other solutions beyond SdS.

We now consider the specific model
\begin{equation}
    f(\mathbb{T},B)=\mathbb{T}-H^2\frac{B}{\mathbb{T}}
\end{equation}
where $H$ is constant. For this theory, equation (\ref{rt2f(T,B)}) becomes
\begin{equation}\label{condition}
    1+H^2 \frac{B}{\mathbb{T}^2}-\frac{H^2}{\mathbb{T}}=\beta.
\end{equation}
where $\beta$ is a constant.

{To calculate the spherically symmetric static vacuum solution, we introduce a spherically symmetric mass distribution at the center as a perturbation. Since we are considering a theory that includes an effective cosmological constant, the vacuum solution will not be Minkowski space, but rather de Sitter space.}

Consider then spherically symmetric perturbations of the de Sitter metric
\begin{equation}\label{spherically-symmetric perturbed metric}
    ds^2=[-1+H^2r^2+A(r)]dt^2+[1+H^2r^2+B(r)]dt^2dr^2+r^2d\Omega^2
\end{equation}
The unperturbed value of the torsion scalar is $\mathbb{T}=-6H^2$. We define a new function
\begin{equation}\label{definition c}
    c(r)\equiv\frac{1}{6}+\frac{H^2}{\mathbb{T}}
\end{equation}
This function $c(r)<<1$ measures the departure of the torsion scalar from the unperturbed value. We expect that $c(r)\rightarrow0$ as we depart from the source of the perturbation and that it can be treated as a perturbation of the same order as $A(r)$ and $B(r)$. {We emphasize that the consistency of this procedure is ensured by employing a de Sitter–compatible connection, as it guarantees that the scalar associated with a spherically symmetric connection departs continuously from its background value.}.
Inverting the definition (\ref{definition c}), we can write the perturbed torsion to linear order in $c$, as
\begin{equation}
    \mathbb{T}=-6H^2-36H^2 c 
\end{equation}
Inserting this relation in condition (\ref{condition}) and solving for $B$ we get
\begin{equation}
    B=-6H^2[1-6(\beta-1)+6c(1-12(\beta-1))]
\end{equation}
To recover the de Sitter value $B=-18H^2$ when we turn off the perturbation, we must have $\beta=\frac{2}{3}$. Then the perturbed scalars can all be written in terms of the function $c(r)$
\begin{subequations}\label{perturbed scalars}
\begin{align}
    \mathbb{T}&=-6H^2(1+6c)\\
    B&=-18H^2(1+10c)\\
    R&=12H^2(1+12c)
\end{align}
\end{subequations}

The trace of the field equation (\ref{equacao f(T,B)}) is
\begin{equation}\label{trace equation f(T,B)}
-Rf_T-2(f-f_TT-f_BB)-3\square f_B=kT    
\end{equation}
For our model $f(\mathbb{T},B)=\mathbb{T}-H^2\frac{B}{\mathbb{T}}$, the trace equation reads
\begin{equation}\label{trace B/T}
    -R\left(1+H^2\frac{B}{\mathbb{T}^2}\right)+2H^2\frac{B}{\mathbb{T}}+3\square\frac{H^2}{\mathbb{T}}=kT 
\end{equation}
Inserting the perturbed scalars from (\ref{perturbed scalars}) in equation (\ref{trace B/T}) and linearizing in $c(r)$, we find
\begin{equation}
    \nabla^2c(r)=\frac{kT}{3}
\end{equation}
where $\nabla^2$ is the flat-space Laplacian operator. We assume that the de Sitter space is perturbed by the presence of pressureless matter, case in which $T=-\rho$. Using that $k=8\pi G$, the equation becomes
\begin{equation}\label{laplace}
    \nabla^2c(r)=-\frac{k\rho}{3}
\end{equation}
Defining the enclosed mass $m(r)=\int_0^{r}4\pi r'^2\rho(r')dr'$, we integrate to find
\begin{equation}
    \frac{dc}{dr}=-\frac{km(r)}{12\pi r^2}
\end{equation}
Outside the matter distribution $r>R_\odot$, the enclosed mass is a constant $m(r)=M$. Therefore for $r>R_\odot$ we obtain
\begin{equation}\label{solution for c}
     \frac{dc}{dr}=-\frac{kM}{12\pi r^2}\implies c(r)=\frac{kM}{12\pi r}\,.
\end{equation}
This is a vacuum solution but depends on the interior matter distribution and matches the interior solution at the boundary, reinforcing that matching conditions determine the correct solution. If instead we had assumed $\rho=0$, then the equation $\nabla^2c=0$ would admit any solution of the form $c(r)=\frac{c_1}{r}$, in particular the case $c=0$,  corresponding to the SdS solution.

Inserting the perturbed metric (\ref{spherically-symmetric perturbed metric}) and scalar quantities (\ref{perturbed scalars}) into the the $tt$ and $rr$ components of the field equations (\ref{equacao f(T,B)}), we obtain the linearized equations
\begin{subequations}
\begin{align}
  \label{erick 1}  \frac{B+B'r}{2r^2}-\nabla^2c=k\rho\\
    4c'r=B+A'r\label{erick2}
\end{align}
\end{subequations}
Using (\ref{laplace}) in (\ref{erick 1}) we get
\begin{equation}
    B+B'r=\frac{4k\rho r^2}{3}
\end{equation}
This leads to
\begin{equation}
    \partial_r(rB)=\frac{4k\rho r^2}{3}\implies B(r)=\frac{kM}{3\pi r}
\end{equation}
Using this result together with (\ref{solution for c}) in (\ref{erick2}) we get
\begin{equation}
    -\frac{kM}{3\pi r}=\frac{kM}{3\pi r}+A'r\implies A(r)=\frac{2}{3\pi}\frac{kM}{r}
\end{equation}
The solution is
\begin{equation}
    ds^2=\left(-1+H^2r^2+\frac{2kM}{3\pi r}\right)dt^2+\left(1+H^2r^2+\frac{kM}{3\pi r}\right)
\end{equation}
We observe that this solution does not reproduce the Newtonian limit $A(r)=\frac{2GM}{r}$, although this discrepancy could be circumvented by redefining the gravitational constant $k\equiv3\pi G$. 
However, the main issue lies in the fact that it predicts $\gamma=\frac{1}{2}$.

If we had naively set $\rho=0$, the trace equation would lead to
\begin{equation}
    \nabla^2c=0\implies c(r)=\frac{c_1}{r}\,;
\end{equation}
$c_1$ is an arbitrary constant that can be determined with the boundary conditions, but by making $\rho=0$ we lost that information. Now the field equations read
\begin{subequations}
\begin{align}
  B+B'r&=0\\
    -4\frac{c_1}{r}&=B+A'r\label{erick22}
\end{align}
\end{subequations}
The first equation implies that $B(r)=\frac{c_2}{r}$ and plugging this in the second we get $A=\frac{4c_1+c_2}{r}$. The Eddington parameter for this solution is 
\begin{equation}
\gamma=\frac{1}{4\frac{c_1}{c_2}+1}    
\end{equation}
In particular, for $\rho=0$, the field equations admit $c_1=0$ as a solution, which corresponds to $\gamma=1$. 

Birkhoff’s theorem is lost in this theory and there may be several spherically symmetric vacuum solutions. However, the Solar System spacetime is determined uniquely by matching the exterior vacuum solution to the interior solution. When this is done correctly, the theory predicts
\begin{equation}
    \gamma=\frac{1}{2}
\end{equation} 
in conflict with Solar System tests, which require $\gamma$ to be extremely close to unity as predicted by GR.

In analogy with the extension of Erickek et. al \cite{erickcek2006} result to a more general class of $f(R)$ theories in \cite{chiba2007solar}, we now generalize our result for $f(\mathbb{T},B)=\mathbb{T}-\alpha\frac{B}{\mathbb{T}}$ to a broader $f(T,B)$ class. 

\begin{theorem}\label{teor}
Let $f(\mathbb{T},B)$ be analytical at a constant scalar vacuum solution $(T_0,B_0)$, with negligible effective mass for the scalar mode. Assume the matter source is pressureless dust and the connection used is de Sitter compatible. Then the Eddington parameter is $\gamma=1/2$.
\end{theorem}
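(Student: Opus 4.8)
The plan is to promote the explicit computation carried out for $f(\mathbb{T},B)=\mathbb{T}-\alpha B/\mathbb{T}$ to an arbitrary analytic $f$, showing that the Eddington parameter is insensitive to the details of $f$. Because the background is a constant scalar vacuum solution it is de Sitter, so I would fix the background values $\mathbb{T}_0=-6H^2$, $B_0=-18H^2$, $R_0=12H^2$ and Taylor expand, $f_{\mathbb{T}}=f_{\mathbb{T}}^0+f_{\mathbb{T}\mathbb{T}}^0\,\delta\mathbb{T}+f_{\mathbb{T}B}^0\,\delta B$ and likewise for $f_B$. The metric is perturbed as in (\ref{spherically-symmetric perturbed metric}) and, crucially, evaluated on the de Sitter compatible connection (\ref{connection f(T) compatible}). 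De Sitter compatibility is precisely what guarantees that the geometric scalars deform continuously from their constant background values, so that $\delta\mathbb{T}$, $\delta B$ and $\delta R=\delta\mathbb{T}-\delta B$ are all linear in the metric perturbations and the function $c(r)$ of (\ref{definition c}) remains well defined.

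The organizing relation comes from the off-diagonal $rt$ equation. On the de Sitter compatible connection it is not an identity and forces $\partial_r(f_{\mathbb{T}}+f_B)=0$, as in (\ref{rt2f(T,B)}); matching $f_{\mathbb{T}}+f_B$ to its background value then yields, at linear order, $\delta f_{\mathbb{T}}+\delta f_B=0$. This is the general counterpart of the model-specific constraint (\ref{condition}): through the Hessian of $f$ it ties $\delta B$ to $\delta\mathbb{T}$, so that every scalar perturbation becomes proportional to the single master field $c(r)$, generalizing (\ref{perturbed scalars}). I would emphasize that $\delta f_{\mathbb{T}}+\delta f_B=0$ is exactly the linearization of the $f(R)$ relation $f_{\mathbb{T}}=-f_B$, which already hints that the perturbative sector collapses onto the $f(R)$, equivalently $\omega=0$ Brans--Dicke, case that is known to give $\gamma=1/2$.

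With this relation I would linearize the trace equation (\ref{trace equation f(T,B)}). Inserting the de Sitter values and $\delta f_B=-\delta f_{\mathbb{T}}$, the non-derivative terms assemble into an effective mass for the scalar mode; the hypothesis of negligible effective mass is exactly the instruction to discard that term, leaving a massless Poisson equation $\nabla^2 c\propto\rho$, the overall constant being immaterial since $\gamma$ is a ratio of potentials. I would then substitute the perturbed scalars into the $tt$ and $rr$ components of (\ref{equacao f(T,B)}) to obtain the analogues of (\ref{erick 1}) and (\ref{erick2}); integrating these against the enclosed mass $m(r)$ fixes the two metric potentials. As in the worked model, the physically correct solution is selected by matching the exterior to the dust interior rather than naively setting $\rho=0$, which removes the spurious freedom that would otherwise allow $\gamma=1$; comparison with the Eddington form (\ref{Eddington Parameter}) then gives $\gamma=1/2$.

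The main obstacle is proving that the ratio of the two potentials is universally $1/2$ and not merely $1/2$ for the single model already treated, i.e. that the Hessian entries $f_{\mathbb{T}\mathbb{T}}^0$, $f_{\mathbb{T}B}^0$, $f_{BB}^0$ drop out of the final answer. I expect the cancellation to occur because the constraint $\delta f_{\mathbb{T}}+\delta f_B=0$ funnels all second-derivative data into the normalization of the one scalar mode, while the specific de Sitter background values make the coefficients in the trace and in the diagonal equations conspire so that only the massless scalar--tensor structure survives. Establishing this cancellation cleanly — ideally by exhibiting the change of variables to the $\omega=0$ Brans--Dicke frame and invoking the Chiba et al. result directly — is the technical core, with the negligible-mass hypothesis guaranteeing that the extra scalar stays long-ranged and therefore actually shifts $\gamma$ away from unity.
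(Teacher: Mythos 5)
Your plan reproduces the paper's proof essentially step for step: the de Sitter compatible connection turns the $rt$ equation into $\partial_r(f_{\mathbb{T}}+f_B)=0$, whose linearization $\delta f_{\mathbb{T}}+\delta f_B=0$ is the key constraint; the trace equation then becomes a massive Poisson equation whose mass term is discarded by hypothesis; and the linearized $tt$ and $rr$ components fix the two potentials and give $\gamma=1/2$. The one step you leave as an expectation --- that the Hessian entries $f_{\mathbb{T}\mathbb{T}0}$, $f_{\mathbb{T}B0}$, $f_{BB0}$ drop out of the final ratio --- is precisely where the paper does its remaining work, and the mechanism is the one you guessed rather than a Brans--Dicke frame change. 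The constraint $\mathbb{T}_{1}(f_{B\mathbb{T}0}+f_{\mathbb{T}\mathbb{T}0})+B_{1}(f_{BB0}+f_{\mathbb{T}B0})=0$ collapses $(\mathbb{T}_1,B_1)$ onto a single mode, and in every branch of the resulting case analysis (according to whether $f_{BB0}+f_{\mathbb{T}B0}$ vanishes, the degenerate sub-case being exactly metric $f(R)$) the massless trace equation can be rewritten as $\nabla^{2}f_{B}=\tfrac{k\rho}{3}$ with a coefficient that is universal, i.e.\ independent of the Hessian; after that the $tt$ and $rr$ equations involve $f$ only through $\nabla^{2}f_{B}$, $\partial_r f_B$ and an overall factor of $f_{\mathbb{T}0}$, which cancels in the ratio of the potentials. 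Your observation that $\delta f_{\mathbb{T}}=-\delta f_B$ is the linearization of the $f(R)$ condition $f_{\mathbb{T}}=-f_B$ is the correct explanation of why the answer must coincide with the Chiba et al.\ value, but to make the theorem self-contained you still need the explicit case-by-case verification above (or an equivalent argument) rather than an appeal to the expected cancellation.
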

\begin{proof}
The metric field equations in $f(\mathbb{T},B)$ gravity are given by
\begin{equation}
    G_{\mu\nu}f_{\mathbb{T}}-\frac{1}{2}g_{\mu\nu}(f-f_{\mathbb{T}}\mathbb{T}-f_{B}B)+2\partial_{\alpha}(f_{T}+f_{B})S_{(\mu\nu)}^{\hspace{1.5em}\alpha}+\nabla_{\mu}\nabla_{\mu}f_{B}-g_{\mu\nu}\square f_{B}=kT_{\mu\nu}
\end{equation}
For a de Sitter compatible connection, the $rt$ component imposes
\begin{equation}
    \partial_{r}(f_{\mathbb{T}}+f_{B})=0
\end{equation}
This simplifies the field equations to
\begin{equation}
    G_{\mu\nu}f_{\mathbb{T}}-\frac{1}{2}g_{\mu\nu}(f-f_{\mathbb{T}}\mathbb{T}-f_{B}B)+\nabla_{\mu}\nabla_{\mu}f_{B}-g_{\mu\nu}\square f_{B}=kT_{\mu\nu}
\end{equation}
Taking the trace, we obtain
\begin{equation}
    (B-T)f_{\mathbb{T}}-2(f-f_{\mathbb{T}}\mathbb{T}-f_{B}B)-3\square f_{B}=kT
\end{equation}
In vacuum and for constant scalars
\begin{equation}
    (B_{0}-\mathbb{T}_{0})f_{\mathbb{T}}-2(f_{0}-f_{\mathbb{T}0}\mathbb{T}_{0}-f_{B0}B_{0})=0
\end{equation}
We consider a perturbed de Sitter metric of the form:
\begin{equation}
    ds^2=[-1+A(r)+\frac{R_0}{12}r^2]dt^2+[1+B(r)+\frac{R_0}{12}r^2]dt^2
\end{equation}
Expanding the scalars
\begin{equation}
\mathbb{T}=\mathbb{T}_{0}+\mathbb{T}_{1}(r) \hspace{1em} B=B_{0}+B_{1}(r)    
\end{equation}
we Taylor-expand the functions
\begin{subequations}
    \begin{align}
        f=f_{0}+f_{T0}\mathbb{T}_{1}+f_{B0}B_{1}\\
        f_{B}=f_{B0}+f_{BT0}\mathbb{T}_{1}+f_{BB0}B_{1}\\
        f_{T}=f_{T0}+f_{TT0}\mathbb{T}_{1}+f_{TB0}B_{1}\\
    \end{align}
\end{subequations}
From the condition $\partial_{r}(f_{T}+f_{B})=0$,  we find
\begin{equation}\label{constr}
    \mathbb{T}_{1}(f_{BT0}+f_{TT0})+B_{1}(f_{BB0}+f_{TB0})=0
\end{equation}
We now show that for the different ways of solving this equation, then the trace equation can always be written as a wave equation  $\nabla^2\chi+m^2\chi=\alpha\rho$.

{Case 1}: $(f_{BB0}+f_{TB0})\neq0$: Solving the constraint (\ref{constr}): 
\begin{equation}
B_{1}=-\frac{f_{BT0}+f_{TT0}}{f_{BB0}+f_{TB0}}\mathbb{T}_{1}    
\end{equation}
Substituting into the trace equation gives a wave-like equation for $\mathbb{T}_1$
\begin{equation}
\nabla^2 \mathbb{T}_{1}+m^2_{eff}\mathbb{T}_{1}=\frac{k\rho}{3\frac{f_{BT0}f_{TB0}-f_{BB0}f_{TT0}}{f_{BB0}+f_{TB0}}} 
\end{equation}
with $m^2_{eff}$ given by
\begin{equation}
    m^2_{eff}=\frac{B_{0}f_{TT0}-f_{T0}+\mathbb{T}_{0}f_{TT0}+2f_{BT0}B_{0}-\frac{f_{BT0}+f_{TT0}}{f_{BB0}+f_{TB0}}(f_{T0}+B_{0}f_{TB0}+2f_{BB0}B_{0}+f_{TB0}\mathbb{T}_{0})}{\frac{f_{BT0}f_{TB0}-f_{BB0}f_{TT0}}{f_{BB0}+f_{TB0}}}
\end{equation}

{Case 2}: $(f_{BB0}+f_{TB0})=0$: The constraint (\ref{constr}) is solved either by
Caso 2A: $(f_{BT0}+f_{TT0})=0$ or 
Caso 2B: $\mathbb{T}_{1}=0$.

In case 2A
\begin{equation}
    \nabla^2(B_{1}-\mathbb{T}_{1})+\frac{f_{TT0}(\mathbb{T}_{0}-B_{0})-f_{T0}}{3f_{BB0}}(B_{1}-\mathbb{T}_{1})=\frac{k\rho}{3f_{BB0}}
\end{equation}
and in case 2B
\begin{equation}
   \nabla^2 B_{1}-\frac{1}{3}\frac{f_{TB0}B_{0}+f_{TB0}\mathbb{T}_{0}+f_{T0}+2f_{BB0}B_{0}}{f_{BB0}}B_{1}=\frac{3k\rho}{f_{BB0}}
\end{equation}
In either case, if the mass term is negligible, then the trace equation is always
\begin{equation}\label{solv}
    \nabla^{2}f_{B}=\frac{k\rho}{3}
\end{equation}
The linearized $tt$ component of the field equations is
\begin{equation}
        G_{tt}f_{\mathbb{T}0}-\frac{1}{2}g_{tt}(f_{0}-f_{\mathbb{T}0}\mathbb{T}_{0}-f_{B0}B_{0})+\nabla^{2}f_{B}=-k\rho g_{tt}
        \end{equation}
Substituting the Einstein tensor and Equation (\ref{solv})
\begin{equation}
    \frac{1}{r^{2}}(B+rB')=\frac{2k\rho}{3f_{T0}}\implies B(r)=\frac{kM}{6\pi f_{T0}r}
\end{equation}
The linearized $rr$ component of the field equations is
\begin{equation}
    G_{rr}f_{\mathbb{T}0}-\frac{1}{4}(B_{0}-\mathbb{T}_{0})f_{\mathbb{T}0}-\frac{2}{r}\partial_rf_B=0
\end{equation}
with solution
\begin{equation}
    -\frac{1}{r^{2}}(B+A'r)-\frac{1}{f_{T0}r}\frac{kM}{6\pi r^2}=0\implies A(r)=\frac{kM}{3\pi f_{T0}r}
\end{equation}
and therefore $\gamma=\frac{1}{2}$.
\end{proof}

This demonstration also encompasses the demonstration for $f(R)$ in \cite{chiba2007solar} or \cite{sotiriou2010f}, as $f(R)=f(\mathbb{T}-B)$ corresponds to the case 2A.
{We also note that considering a scalar with a negligible mass is the physically relevant case, since $m_{eff}^2\sim H^2$ a scale that is tiny compared to that of the Solar System; the evolution of the universe is extremely slow compared to local dynamics.}

To emphasize the importance of using a de Sitter compatible connection, we will now solve the linearized equations in $f(\mathbb{T},B)$ for the $\Gamma^+$ connection which is not de Sitter compatible. For this connection, the perturbed scalars are
\begin{subequations}
\begin{align}
    \mathbb{T}&=\frac{8}{r^{2}}(1-\frac{B}{2}-\frac{rA'}{2})\\
    B&=\frac{2}{r^{2}}(4-\frac{A''}{2}r^{2}-B'r-3B-3rA')
\end{align}
\end{subequations}
and the field equations are
{\scriptsize
\begin{subequations}
\begin{align}
\frac{f}{2}-f_{\mathbb{T}}\left(\frac{\mathbb{T}}{2}+\frac{1}{r^{2}g_{rr}}-\frac{1}{r^{2}}-\frac{\partial_{r}g_{rr}}{rg_{rr}g_{rr}}\right)-\frac{f_{B}B}{2}-\frac{2}{rg_{rr}}\partial_{r}(f_{\mathbb{T}}+f_{B})(1+\sqrt{g_{rr}})+\frac{\partial_{r}^{2}f_{B}}{g_{rr}}+\frac{\partial_{r}f_{B}}{g_{rr}}\left(\frac{2}{r}-\frac{\partial_{r}g_{rr}}{2g_{rr}}\right)&=k\rho\\ 
-\frac{f}{2}+f_{\mathbb{T}}\left(\frac{\mathbb{T}}{2}+\frac{1}{r}\frac{\partial_{r}g_{tt}}{g_{tt}g_{rr}}+\frac{1}{g_{rr}r^{2}}-\frac{1}{r^{2}}\right)+\frac{f_{B}B}{2}-\frac{\partial_{r}f_{B}}{g_{rr}}\left(\frac{\partial_{r}g_{tt}}{2g_{tt}}+\frac{2}{r}\right)&=kp
\end{align}
\end{subequations}}
When we substitute the ansatz metric (\ref{spherically-symmetric perturbed metric})
and the linearized scalars in the field equations, and assuming dust-like matter
\begin{subequations}
\begin{align}
B+A'r+2H^2 r^2&=0\\
B'r+B+\frac{7}{4}H^2 r^{2}&=k\rho
\end{align}
\end{subequations}
The exterior solution is $B(r)=\frac{2GM}{r}-\frac{7H^2r^{2}}{12}$ and $A(r)=\frac{2GM}{r}-\frac{17}{24}H^2 r^2$ and the metric is
\begin{equation}\label{wrong sol}
    ds^2=[-1+\frac{2GM}{r}+\frac{7}{24}H^2 r^{2}]dt^2+[1+\frac{2GM}{r}+\frac{5}{12}H^2 r^{2}]+r^2d\Omega^2
\end{equation}
While this metric recovers the correct behavior in the $r\rightarrow0$ limit, corresponding to the Schwarzschild solution, it fails to asymptotically approach the de Sitter metric as $r\rightarrow\infty$. This discrepancy reflects the fact that the connection $\Gamma^+$ is not de Sitter compatible.

We see that the correct de Sitter limit in $f(\mathbb{T})$ imposes the SdS solution and in $f(\mathbb{T},B)$ it leads so deviation from $\gamma=1$. We presented a solution (\ref{wrong sol}) that as the correct $\gamma=1$ limit but does not match the SdS solution in the $r\rightarrow\infty$ limit.

\subsection{Solar Systems Test in Symmetric Teleparallelism}

The imposition on the connection to be static and spherically symmetric, under the geometric postulates of STG (flat and torsionless connection), leads to two solution families for the connection, as detailed in \cite{Black_Hole}.

{Solution 1}: The first class of solutions admits the following non-vanishing components for the affine connection
\begin{equation}
\label{solution1}
\begin{aligned}
&\Gamma^t_{\hspace{0.5em}\mu\nu}=\begin{pmatrix} 
c & \Gamma^\phi_{\hspace{0.5em} r\phi} & 0 &0\\
&\Gamma^\phi_{\hspace{0.5em} r\phi} & \Gamma^t_{\hspace{0.5em} rr} & 0 &0\\
0 & 0 & -\frac{1}{c} &0\\
0 & 0 & 0 &-\frac{\sin^2\theta}{c}
\end{pmatrix}
&\Gamma^r_{\hspace{0.5em}\mu\nu}=\begin{pmatrix} 
0 & 0 & 0 &0\\
0 & \Gamma^r_{\hspace{0.5em}rr} & 0 &0\\
0 & 0 & 0 &0\\
0 & 0 & 0 &0
\end{pmatrix}\\
&\Gamma^\theta_{\hspace{0.5em}\mu\nu}=\begin{pmatrix} 
0 & 0 & c &0\\
0 & 0 & \Gamma^\phi_{\hspace{0.5em}r\phi} &0\\
c & \Gamma^\phi_{\hspace{0.5em}r\phi} & 0 &0\\
0 & 0 & 0 &-\cos\theta\sin\theta
\end{pmatrix}
&\Gamma^\phi_{\hspace{0.5em}\mu\nu}=\begin{pmatrix} 
0 & 0 & 0 &c\\
0 & 0 & 0 &\Gamma^\phi_{\hspace{0.5em}r\phi}\\
0 & 0 & 0 &\cot\theta\\
c & \Gamma^\phi_{\hspace{0.5em}r\phi} & \cot\theta &0
\end{pmatrix}
\end{aligned}
\end{equation}
The undetermined components $\Gamma^t_{rr}$, $\Gamma^r_{rr}$, and $\Gamma^\phi_{r\phi}$ are constrained by the following relation:
\begin{equation}
\partial_r\Gamma^{\phi}_{\hspace{0.5em}r\phi}=c\Gamma^t_{\hspace{0.5em}rr}+\Gamma^\phi_{\hspace{0.5em}r\phi}(\Gamma^r_{\hspace{0.5em}rr}-\Gamma^\phi_{\hspace{0.5em}\rho\phi})
\end{equation}

{Solution 2}:
{\footnotesize
\begin{equation}
\label{solution2}
\begin{aligned}
&\Gamma^t_{\hspace{0.5em}\mu\nu}=\begin{pmatrix} 
k-c-c(2c-k)\Gamma^t_{\hspace{0.5em}\theta\theta} & \frac{(2c-k)\Gamma^t_{\hspace{0.5em}\theta\theta}(1+c\Gamma^t_{\hspace{0.5em}\theta\theta})}{\Gamma^r_{\hspace{0.5em}\theta\theta}} & 0 &0\\
\frac{(2c-k)\Gamma^t_{\hspace{0.5em}\theta\theta}(1+c\Gamma^t_{\hspace{0.5em}\theta\theta})}{\Gamma^r_{\hspace{0.5em}\theta\theta}} & \Gamma^t_{\hspace{0.5em} rr} & 0 &0\\
0 & 0 & \Gamma^t_{\hspace{0.5em}\theta\theta} &0\\
0 & 0 & 0 &\Gamma^t_{\hspace{0.5em}\theta\theta}\sin^2\theta
\end{pmatrix}\\
&\Gamma^r_{\hspace{0.5em}\mu\nu}=\begin{pmatrix} 
-c(2c-k)\Gamma^t_{\hspace{0.5em}\theta\theta} & c(2c-k)\Gamma^t_{\hspace{0.5em}\theta\theta}+c & 0 &0\\
c(2c-k)\Gamma^t_{\hspace{0.5em}\theta\theta}+c & \Gamma^r_{\hspace{0.5em}rr} & 0 &0\\
0 & 0 & \Gamma^r_{\hspace{0.5em}\theta\theta} &0\\
0 & 0 & 0 &\Gamma^r_{\hspace{0.5em}\theta\theta}\sin^2\theta
\end{pmatrix}\\
&\Gamma^\theta_{\hspace{0.5em}\mu\nu}=\begin{pmatrix} 
0 & 0 & c &0\\
0 & 0 & \frac{-c\Gamma^t_{\hspace{0.5em}\theta\theta}-1}{\Gamma^r_{\hspace{0.5em}\theta\theta}} &0\\
c & \frac{-c\Gamma^t_{\hspace{0.5em}\theta\theta}-1}{\Gamma^r_{\hspace{0.5em}\theta\theta}} & 0 &0\\
0 & 0 & 0 &-\cos\theta\sin\theta
\end{pmatrix}
\Gamma^\phi_{\hspace{0.5em}\mu\nu}=\begin{pmatrix} 
0 & 0 & 0 &c\\
0 & 0 & 0 &\frac{-c\Gamma^t_{\hspace{0.5em}\theta\theta}-1}{\Gamma^r_{\hspace{0.5em}\theta\theta}}\\
0 & 0 & 0 &\cot\theta\\
c & \frac{-c\Gamma^t_{\hspace{0.5em}\theta\theta}-1}{\Gamma^r_{\hspace{0.5em}\theta\theta}} & \cot\theta &0
\end{pmatrix}
\end{aligned}
\end{equation}}
These components satisfy the differential constraints
\begin{subequations}\label{constraints set 2}
    \begin{align}
        \partial_r\Gamma^t_{\hspace{0.5em}\theta\theta}&=-\frac{\Gamma^t_{\hspace{0.5em}\theta\theta}}{\Gamma^r_{\hspace{0.5em}\theta\theta}}[1+\Gamma^t_{\hspace{0.5em}\theta\theta}(3c-k+(2c-k)]-\Gamma^t_{\hspace{0.5em}rr}\Gamma^r_{\hspace{0.5em}\theta\theta}\\
        \partial_r\Gamma^r_{\hspace{0.5em}\theta\theta}&=-1-c\Gamma^t_{\hspace{0.5em}\theta\theta}(2+(2c-k)\Gamma^t_{\hspace{0.5em}\theta\theta})-\Gamma^r  _{\hspace{0.5em}rr}\Gamma^r_{\hspace{0.5em}\theta\theta}
    \end{align}
\end{subequations}

In section \ref{connection in tele}, we have shown that there is always a coordinate system where the connection vanishes. These coordinate systems were worked out in \cite{bahamonde2022coincident} for both Solution 1 and 2. They also showed that the metric tensor written in that coordinate systems has a much more complicated form, as it now encodes all the information previously contained in the connection. They conclude that nothing is gained by using the coincident gauge.

These connections can be further constrained using the metric field equations. The off-diagonal $rt$ component of the metric field equation (\ref{metric equation f(Q) Einstein form}) has the form
\begin{equation}
f_{\mathbb{Q}\mathbb{Q}}\partial_r\mathbb{Q}P^r_{\hspace{0.7em}rt}=0    
\end{equation}
Since $\mathbb{Q}$ being constant leads to General Relativity, the only way to get new solutions is if the equation is an identity, choosing a connection for which $P^r_{\hspace{0.7em}rt}=0$. It has been proven in \cite{Black_Hole} that this is impossible for Solution 1 which therefore yields only GR solutions. However, Solution 2 allows for this condition to be satisfied when additional constraints are imposed. Two distinct branches of viable connections emerge:

{Branch 1}:
\begin{equation}
    \Gamma^t_{\hspace{0.5em}\theta\theta}=\frac{k}{2c(2c-k)}\hspace{1em}
    \Gamma^t_{\hspace{0.5em}rr}=\frac{k(8c^2+2ck-k^2)}{8c^2(2c-k)^2(\Gamma^r_{\hspace{0.5em}\theta\theta})^2}\hspace{1em}c\neq0;k\neq2c
\end{equation}

{Branch 2:}
\begin{equation}
    \Gamma^t_{\hspace{0.5em}rr}=-\frac{\Gamma^t_{\hspace{0.5em}\theta\theta}}{(\Gamma^r_{\hspace{0.5em}\theta\theta})^2}\hspace{1em}c=k=0
\end{equation}

A particular solution within Branch 2, which is fully determined by the metric and is consistent with the connection field equations, is given by setting
\begin{equation}\label{conection f(Q) dependent on the metric}
    \Gamma^r_{\hspace{0.5em}\theta\theta}=\pm\frac{r}{\sqrt{g_{rr}}}
\end{equation}
These connections (\ref{conection f(Q) dependent on the metric}) were used to study static spherically symmetric solutions beyond GR in the theory $f(\mathbb{Q)}=\mathbb{Q}+\alpha\mathbb{Q}^n-2\Lambda$ in \cite{wang2025}.
Static, spherically symmetric solutions in the coincident gauge were analyzed in \cite{wang2022static,lin2021spherically}, where it was found that the exterior solutions reduce to those of GR.

For $f(\mathbb{Q})$ gravity, the cosmological connection in flat FLRW spacetime was found to be
\begin{equation}
\tag{\ref{Connection f(Q) k=0 I}}
\begin{aligned}
&\Gamma^t_{\hspace{0.5em}\mu\nu}=\bold{0}^{(4\times4)}\hspace{7em}
&\Gamma^r_{\hspace{0.5em}\mu\nu}=\begin{pmatrix}
0 & 0 & 0 &0\\
0 & 0 & 0 &0\\
0 & 0 & -R &0\\
0 & 0 & 0 &-R  \sin^2\theta
\end{pmatrix}\\
&\Gamma^\theta_{\hspace{0.5em}\mu\nu}=\begin{pmatrix}
0 & 0 & 0 &0\\
0 & 0 & \frac{1}{R} &0\\
0 &  \frac{1}{R}&0 &0\\
0 & 0 & 0 &-\cos\theta\sin\theta
\end{pmatrix}
&\Gamma^\phi_{\hspace{0.5em}\mu\nu}=\begin{pmatrix}
0 & 0 & 0 &0\\
0 & 0 & 0 &\frac{1}{R}\\
0 & 0 & 0 &\cot\theta\\
0 & \frac{1}{R} & \cot\theta &0
\end{pmatrix}
\end{aligned}
\end{equation}
Applying the coordinate transformation that relates the FLRW and static spherically symmetric representations of de Sitter spacetime, we obtain
\begin{equation} \label{Connection f(q) Transformed}
\begin{aligned}
&\Gamma^t_{\hspace{0.5em}\mu\nu}=\begin{pmatrix} 
\frac{H^3r^2}{1-H^2r^2}
 & \frac{-H^2r}{(1-H^2r^2)^2} & 0&0 \\
\frac{-H^2r}{(1-H^2r^2)^2} & \frac{-H+2H^3r^2}{(1-H^2r^2)^3} & 0 &0\\
0 & 0 &  \frac{-Hr^2}{1-H^2r^2} &0\\
0 & 0 & 0 & \frac{-Hr^2}{1-H^2r^2} \sin^2\theta
\end{pmatrix}\\
&\Gamma^r_{\hspace{0.5em}\mu\nu}=\begin{pmatrix} 
H^2r & \frac{-H}{1-H^2r^2} & 0 &0\\
\frac{-H}{1-H^2r^2} & \frac{H^2r(2-H^2r^2)}{(1-H^2r^2)^2} & 0 &0\\
0 & 0 & -r &0\\
0 & 0 & 0 &-r \sin^2\theta
\end{pmatrix}\\
&\Gamma^\theta_{\hspace{0.5em}\mu\nu}=\begin{pmatrix}
0 & 0 & -H &0\\
0 & 0 & \frac{1}{r}\frac{1}{1-H^2r^2} &0\\
-H &  \frac{1}{r}\frac{1}{1-H^2r^2}&0 &0\\
0 & 0 & 0 &-\cos\theta\sin\theta
\end{pmatrix}\\
&\Gamma^\phi_{\hspace{0.5em}\mu\nu}=\begin{pmatrix}
0 & 0 & 0 &-H\\
0 & 0 & 0 &\frac{1}{r}\frac{1}{1-H^2r^2}\\
0 &  0 &0 &\cot\theta \\
-H & \frac{1}{r}\frac{1}{1-H^2r^2} & \cot\theta &0
\end{pmatrix}
\end{aligned}
\end{equation}
This rules out the connection from solution 1 (\ref{solution1}), as in that case $\Gamma^t_{\hspace{0.5em}\theta\theta}$ is constant. If we plug $c=k=-H$, $\Gamma^r_{\hspace{0.5em}\theta\theta}=-r$ and $\Gamma^t_{\hspace{0.5em}\theta\theta}=\frac{-Hr^2}{1-H^2r^2}$ in Solution 2, the connection is equal to \ref{Connection f(q) Transformed}, confirming that Solution 2 is de Sitter compatible.

Even though Solution 1 is not de Sitter compatible, it cannot generate any solutions that deviate from General Relativity for any choice of theory. Choosing this connection results in the SdS metric independently of the form of $f(\mathbb{Q})$, and thus leaves the free parameters of the theory completely unconstrained. In contrast, Solution 2 allows for deviations from GR, but only under specific conditions. For instance, Option 1 is again ruled out, as it would require $\Gamma^t_{\hspace{0.7em}\theta\theta}$ to be constant. 
Option 2 implies a relation $\Gamma^t_{\hspace{0.7em}rr}=-\frac{\Gamma^t_{\hspace{0.7em}\theta\theta}}{(\Gamma^r_{\hspace{0.7em}\theta\theta})^2}$ which is not satisfied by the de Sitter compatible connection. Therefore, no deviation from the SdS solution arises here either.

We conclude that for $f(\mathbb{Q})$, de Sitter compatibility of the connection leads to the SdS solution, independently of the theory. In particular, for  $f(\mathbb{Q})=\mathbb{Q}+\alpha\sqrt{-\mathbb{Q}}-2\Lambda$, this solution does not depend on the parameter $\alpha$. The parameter is not constrained either by cosmology or Solar System tests.

In the case of $f(\mathbb{Q},C)$ gravity, the method previously applied to $f(\mathbb{T},B)$ gravity is still applicable. The derivation only references the connection in the non-diagonal components of the field equations. These either vanish identically for some connection choices (which were never de Sitter compatible) or enforce a constraint $\partial_r(f_\mathbb{Q}+f_C)=0$. The connection does not influence the remaining derivation in any other way. It will just depend on the Lagrangian and the metric ansatze. Therefore, the results we found in $f(\mathbb{T},B)$ gravity apply to $f(\mathbb{Q},C)$ as well.

Despite $f(\mathbb{Q})$ having more connections available, it behaves very similarly to $f(\mathbb{T})$. When choosing a connection that is de Sitter compatible, every $f(\mathbb{Q})$ and $f(\mathbb{T})$ theories lead to SdS metric. In particular, $f(\mathbb{Q})=\mathbb{Q}+\alpha\sqrt{-\mathbb{Q}}-2\Lambda$ and his torsion analogue are indistinguishable from General Relativity with a Cosmological Constant for any value of $\alpha$. Consequently, Occam's razor would suggest that these extensions are redundant unless additional evidence is found in cosmological perturbations.

All in all, the overarching outcome is that theories that use boundary terms to reconstruct a cosmological constant predict an Eddington parameter $\gamma=\frac{1}{2}$ when a de Sitter compatible connection is used and the mass term is negligible.

\section{\label{sec6}Conclusion}

{In this work, we have investigated the modified gravity theories $f(R),f(\mathbb{T}),f(\mathbb{Q})$, $f(\mathbb{T},B)$ and $f(\mathbb{Q},C)$ in two different contexts: cosmological and Solar System scales. In particular, we examined their ability to reproduce a $\Lambda$CDM-like cosmological expansion and computed their Eddington parameter $\gamma$}

In the context of cosmological reconstruction, we showed that:
\begin{itemize}
    \item For $f(R);f(\mathbb{T});f(\mathbb{Q})$ the reconstruction of the $\Lambda$CDM model requires an explicit cosmological constant. However, $f(\mathbb{T})$ and $f(\mathbb{Q})$ allow for additional terms ($\sqrt{\mathbb{-T}}$ and $\sqrt{\mathbb{-Q}}$), which do not affect the background evolution, as already proven in \cite{LCDM_Reconstruction,Gadbail_2022_Q}
    \item For the extension $f(\mathbb{T},B)$ we introduced the ansatz $f(\mathbb{T},B)=f_1(\mathbb{T})+Bf_2(\mathbb{T)}$ and and showed that infinitely many pairs $(f_1,f_2)$ can reconstruct $\Lambda$CDM. For instance, the model $f(\mathbb{T},B)=\mathbb{T}-\alpha\frac{B}{T}$ yields an effective cosmological constant $\Lambda_{eff}=3\alpha$.
    \item For connection $\Gamma_1$, $f(\mathbb{Q},C)$ gives the same results as $f(\mathbb{T},B)$. In fact, $f(\mathbb{T},B)$ is a subset of $f(\mathbb{Q},C)$.
    \item For the dynamical connections $\Gamma^2$ and $\Gamma^3$, we showed that $\Lambda$CDM reconstruction is possible even for constant $\mathbb{Q}$ thanks to the presence of extra degrees of freedom. In contrast, for $\Gamma^1$, constant $\mathbb{Q}$ implies constant $H$, and thus only the de Sitter expansion is allowed.
\end{itemize}

We then analyzed the Solar System behavior of these reconstructed theories, aiming to compute spherically symmetric and static solutions and evaluate the Eddington parameter $\gamma$. Following the methodology of \cite{erickcek2006}, we emphasized the importance of matching the exterior vacuum solution to the interior solution of the Sun and of perturbing around a de Sitter background. However, we identified a key issue: many connections presented in the literature are not consistent with a de Sitter background. To address this, we introduced the concept of a de Sitter compatible connection, defined as a static, spherically symmetric connection that can be obtained from a homogeneous and isotropic connection through the appropriate coordinate transformation of the de Sitter spacetime. We demonstrated that such connections exist in both TG and STG. With this notion in place, we proved that
\begin{itemize}
    \item in $f(\mathbb{T})$ (or $f(\mathbb{Q})$), a de Sitter compatible connection implies constant $\mathbb{T}$ (or $\mathbb{Q})$. The resulting solution is the SdS metric, and the theory predicts $\gamma=1$, independently of the function $f$.
    \item in $f(\mathbb{T},B)$ (or $f(\mathbb{Q},C)$), even with de Sitter compatible connections, $\mathbb{T}$ (or $\mathbb{Q}$) needs not be constant. Applying the matching procedure to the reconstructed theory $f(\mathbb{T},B)=\mathbb{T}-\alpha\frac{B}{T}$ (and its STG analog), we found that it yields $\gamma=\frac{1}{2}$, which is inconsistent with observations.
    \item analogously to \cite{chiba2007solar}, $\gamma=\frac{1}{2}$ holds for the $f(\mathbb{T},B)$ and $f(\mathbb{Q},C)$ theories with a negligible mass term.
\end{itemize}

{Our main result is that theories including boundary terms can produce accelerated cosmological expansion without a cosmological constant, but fail to satisfy Solar System constraints. Conversely, theories without boundary terms are consistent with Solar System observations but require a cosmological constant to drive acceleration.
This conclusion, however, is not definitive: the square-root terms to which cosmology is insensitive, can affect local dynamics and may render the effective mass term non-negligible, thereby avoiding Theorem\ref{teor}}.

Regarding the distinction among the three types of geometry and the identification of the most promising geometric mediator of gravity, our results reaffirm the ambiguity between them. In particular, when restricted to single-variable theories all three require the introduction of an explicit cosmological constant to reproduce $\Lambda$CDM evolution and when extended to include boundary terms they present the same solutions.

The fundamental difference between these theories seems to be the flexibility displayed by the connection. From this perspective, STG seems to be the most promising of the three, as it admits a wider variety of connections that could enrich its phenomenology, as we have seen in the cosmology section. However, most of those connections may be ruled out by physical criteria, as shown for spherically symmetric connections. Ultimately, the quest to understand the true geometry of gravity continues, and with it, the promise of new physics at the intersection of geometry, cosmology, and observation.

\ack{Discussions with Filipe Mena on the topic of this work are gratefully acknowledged.}

\funding{This work was financed by Portuguese funds through FCT (Funda\c c\~ao para a Ci\^encia e a Tecnologia) in the framework of the project 2022.04048.PTDC (Phi in the Sky, DOI 10.54499/2022.04048.PTDC). CJM also acknowledges FCT and POCH/FSE (EC) support through the Investigador FCT Contract 2021.01214.CEECIND/CP1658/CT0001 (DOI 10.54499/2021.01214.CEECIND/CP1658/CT0001).}

\roles{P. A. G. Monteiro: Conceptualization, Methodology, Investigation, Formal analysis, Writing – original draft; C. J. A. P. Martins: Conceptualization, Methodology, Funding acquisition, Project administration, Supervision, Writing – review \& editing.}

\data{No research data has been analyzed in this work.}

\bibliography{geometric}
\bibliographystyle{unsrt}
\end{document}